\documentclass{article}
\usepackage{arxiv}

\usepackage[utf8]{inputenc} 
\usepackage[T1]{fontenc}    
\usepackage{hyperref}       
\usepackage{url}            
\usepackage{booktabs}       
\usepackage{amsfonts}       
\usepackage{nicefrac}       
\usepackage{microtype}      
\usepackage{lipsum}		
\usepackage{graphicx}
\usepackage{cite}

\usepackage{array}
\usepackage{amssymb, amsmath, amsthm}
\usepackage{graphicx}
\usepackage{lmodern,url}
\usepackage{makecell} 
\usepackage{cancel}
\usepackage{multirow}
\usepackage{microtype}
\usepackage{lineno}
\usepackage{xspace}
\usepackage{xcolor}
\usepackage{siunitx}
\usepackage{todonotes}
\usepackage{booktabs}
\usepackage[ruled,vlined]{algorithm2e}
\usepackage{optidef}
\usepackage{mathdots}
\usepackage{subcaption}
\captionsetup{compatibility=false}
\usepackage{csquotes}
\usepackage{caption}
\captionsetup{font=footnotesize}
\linespread{1.35}

\newcommand{\dis}{\displaystyle}

\newcommand{\figref}[1]{Fig.~\ref{#1}}
\newcommand{\tabref}[1]{\tablename~\ref{#1}}
\graphicspath{{R1_Figures/}}
\newtheorem{theorem}{Theorem}[section]
\newtheorem{corollary}{Corollary}[theorem]
\usepackage{amsmath}
\usepackage{tikz}
\usetikzlibrary{fit}
\usetikzlibrary{positioning}
\tikzset{%
  highlight/.style={rectangle,rounded corners,fill=red!15,draw,fill opacity=0.25,thick,inner sep=0pt}
}
\newcommand{\tikzmark}[2]{\tikz[overlay,remember picture,baseline=(#1.base)] \node (#1) {#2};}
\newcommand{\Highlight}[1][submatrix]{%
    \tikz[overlay,remember picture]{
    \node[highlight,fit=(left.north west) (right.south east)] (#1) {};}
}

\newcommand{\R}{\mathbb{R}}

\definecolor{mygreen}{RGB}{160, 242,182}

\newcommand{\RIC}[1]{\todo[inline,caption={},color=mygreen]{{\color{black}\textbf{Author summary}}\\ #1}}

\newcommand{\xunderbrace}[2][\vphantom{\dfrac{A}{A}}]{\underbrace{#1#2}}
\definecolor{myambar}{RGB}{233, 143, 0}
\definecolor{myblue}{RGB}{0, 93, 149}

\newcommand{\virload}{\ensuremath{\sigma}\xspace}
\newcommand{\contMatrix}{\ensuremath{C}\xspace}
\newcommand{\ImDel}{\ensuremath{\tau}\xspace}
\newcommand{\Ieff}{\ensuremath{\sum_{j,\nu}\contMatrix_{ji}\frac{\virload^{\nu}I^\nu_j}{M_j}}\xspace}
\newcommand{\Ieffs}[1]{\ensuremath{\sum_{j,\nu}R_{#1}\contMatrix_{ji}\frac{\virload^{\nu}I^\nu_j\left(#1\right)}{M_j}}\xspace}
\newcommand{\ICU}{\ensuremath{{\rm ICU}}\xspace}
\newcommand{\dratei}{\ensuremath{\delta_i}\xspace}
\newcommand{\drateiICU}{\ensuremath{\dratei^{\ICU}}\xspace}
\newcommand{\recratei}{\ensuremath{\gamma_i}\xspace}
\newcommand{\recrateiICU}{\ensuremath{\recratei^{\rm ICU}}\xspace}
\newcommand{\ICUratei}{\ensuremath{\alpha_i}\xspace}
\newcommand{\Si}{\ensuremath{S_i}\xspace}
\newcommand{\Sivi}{\ensuremath{\Si^{1}}\xspace}
\newcommand{\Sivii}{\ensuremath{\Si^{2}}\xspace}
\newcommand{\Vi}{\ensuremath{V_i}\xspace}
\newcommand{\Vivi}{\ensuremath{\Vi^{0}}\xspace}
\newcommand{\Vividel}{\ensuremath{\fvi{t-\ImDel}\frac{\Si}{\Si + \Ri}\Bigg|_{t-\ImDel}}\xspace}
\newcommand{\Vivii}{\ensuremath{\Vi^{1}}\xspace}
\newcommand{\Viviidel}{\ensuremath{\fvii{t-\ImDel}\frac{\Sivi}{\Sivi + \Rivi}\Bigg|_{t-\ImDel}}\xspace}
\newcommand{\Ei}{\ensuremath{E_i}\xspace}
\newcommand{\Eivi}{\ensuremath{\Ei^{1}}\xspace}
\newcommand{\Eivii}{\ensuremath{\Ei^{2}}\xspace}
\newcommand{\Ii}{\ensuremath{I_i}\xspace}
\newcommand{\Iiv}{\ensuremath{\Ii^{\nu}}\xspace}
\newcommand{\Iivi}{\ensuremath{\Ii^{1}}\xspace}
\newcommand{\Iivii}{\ensuremath{\Ii^{2}}\xspace}
\newcommand{\ICUi}{\ensuremath{\ICU_i}\xspace}
\newcommand{\ICUiv}{\ensuremath{\ICUi^{\nu}}\xspace}
\newcommand{\ICUivi}{\ensuremath{\ICUi^{1}}\xspace}
\newcommand{\ICUivii}{\ensuremath{\ICUi^{2}}\xspace}
\newcommand{\Di}{\ensuremath{D_i}\xspace}
\newcommand{\Divi}{\ensuremath{\Di^{1}}\xspace}
\newcommand{\Divii}{\ensuremath{\Di^{2}}\xspace}
\newcommand{\Ri}{\ensuremath{R_i}\xspace}
\newcommand{\Rivi}{\ensuremath{\Ri^{1}}\xspace}
\newcommand{\Rivii}{\ensuremath{\Ri^{2}}\xspace}
\newcommand{\RtH}{\ensuremath{R_t}\xspace}
\newcommand{\latRate}{\ensuremath{\rho}\xspace}
\newcommand{\fracImmuni}{\ensuremath{\eta_0}\xspace}
\newcommand{\fracImmun}{\ensuremath{\eta}\xspace}
\newcommand{\uptakei}{\ensuremath{u_i}\xspace}

\newcommand{\randomvacc}{\ensuremath{v_r}\xspace}
\newcommand{\chii}{}

\newcommand{\Phii}{\ensuremath{\Phi_i}\xspace}
\newcommand{\fraccont}{\ensuremath{p_i(t)}\xspace}
\newcommand{\fracconti}{\fraccont}
\newcommand{\fraccontii}{\fraccont}
\newcommand{\gammaeq}{\ensuremath{\bar{\gamma}}\xspace}
\newcommand{\Nhatobs}{\ensuremath{\hat{N}^{\text{obs}}}\xspace}
\newcommand{\Rtobs}{\ensuremath{\hat{R}_t^\text{obs}}\xspace}
\newcommand{\fvi}[1]{\ensuremath{f_i^{1}\left(#1\right)}\xspace}
\newcommand{\fvii}[1]{\ensuremath{f_i^{2}\left(#1\right)}\xspace}
\newcommand{\delDose}{\ensuremath{\tau_{\rm{vac}}}\xspace}
\newcommand{\fICUdeath}{\ensuremath{f_{\delta}}\xspace}
\newcommand{\fItoICU}{\ensuremath{f_{\ICU}}\xspace}
\newcommand{\fIdeath}{\ensuremath{f_{\Ii\to\Di}}\xspace}
\newcommand{\ICUresTime}{\ensuremath{T_{\rm res}^{\rm ICU}}\xspace}
\newcommand{\IresTime}{\ensuremath{T_{\rm res}^{\rm I}}\xspace}
\newcommand{\avProtection}{\ensuremath{\kappa}\xspace}
\newcommand{\avProtectioni}{\ensuremath{\kappa_0}\xspace}

\newcommand{\hd}{{\color{white}\!\Bigg|_{\,}\!}}         
\newcommand{\hdf}{{\color{white}\!\Big|_{\,}\!}}         
\allowdisplaybreaks

\title{Relaxing restrictions at the pace of vaccination increases freedom and guards against further COVID-19 waves}

\usepackage{authblk}

\author[1,\P]{Simon Bauer}
\author[1,2\P]{Sebastian Contreras}
\author[1]{Jonas Dehning}
\author[1,3]{Matthias Linden}
\author[1]{Emil Iftekhar}
\author[1]{Sebastian B. Mohr}
\author[2]{Alvaro Olivera-Nappa}
\author[1,4\thanks{\tt{viola.priesemann@ds.mpg.de}}]{Viola Priesemann}

\affil[1]{Max Planck Institute for Dynamics and Self-Organization, Am Fa{\ss}berg 17, 37077 G\"ottingen, Germany.}
\affil[2]{Centre for Biotechnology and Bioengineering, Universidad de Chile, Beauchef 851, 8370456 Santiago, Chile.}
\affil[3]{Institute for Theoretical Physics, Leibniz University, 30167 Hannover.}
\affil[4]{Institute for the Dynamics of Complex Systems, University of G\"ottingen, Friedrich-Hund-Platz 1, 37077 G\"ottingen, Germany.}
\affil[ ]{{\P} These authors contributed equally}

\date{}


\begin{document}
\maketitle
\begin{abstract}
Mass vaccination offers a promising exit strategy for the COVID-19 pandemic. However, as vaccination progresses, demands to lift restrictions increase, despite most of the population remaining susceptible.
Using our age-stratified SEIRD-ICU compartmental model and curated epidemiological and vaccination data, we quantified the rate (relative to vaccination progress) at which countries can lift non-pharmaceutical interventions without overwhelming their healthcare systems. We analyzed scenarios ranging from immediately lifting restrictions (accepting high mortality and morbidity) to reducing case numbers to a level where test-trace-and-isolate (TTI) programs efficiently compensate for local spreading events.
In general, the age-dependent vaccination roll-out implies a transient decrease of more than ten years in the average age of ICU patients and deceased.
The pace of vaccination determines the speed of lifting restrictions; Taking the European Union (EU) as an example case, all considered scenarios allow for steadily increasing contacts starting in May 2021 and relaxing most restrictions by autumn 2021. Throughout summer 2021, only mild contact restrictions will remain necessary. 
However, only high vaccine uptake can prevent further severe waves. Across EU countries, seroprevalence impacts the long-term success of vaccination campaigns more strongly than age demographics. In addition, we highlight the need for preventive measures to reduce contagion in school settings throughout the year 2021, where children might be drivers of contagion because of them remaining susceptible.
Strategies that maintain low case numbers, instead of high ones, reduce infections and deaths by factors of eleven and five, respectively.
In general, policies with low case numbers significantly benefit from vaccination, as the overall reduction in susceptibility will further diminish viral spread. Keeping case numbers low is the safest long-term strategy because it considerably reduces mortality and morbidity and offers better preparedness against emerging escape or more contagious virus variants while still allowing for higher contact numbers ("freedom") with progressing vaccinations.
\end{abstract}

\RIC{
In this work, we quantify the rate at which non-pharmaceutical interventions can be lifted as COVID-19 vaccination campaigns progress. With the constraint of not exceeding ICU capacity, there exists only a relatively narrow range of plausible scenarios. We selected different scenarios ranging from the immediate release of restrictions to more conservative approaches aiming at low case numbers.
In all considered scenarios, the increasing overall immunity (due to vaccination or post-infection) will allow for a steady increase in contacts. However, deaths and total cases (potentially leading to \textit{long covid}) are only minimized when aiming for low case numbers, and restrictions are lifted at the pace of vaccination. These qualitative results are general. Taking EU countries as quantitative examples, we observe larger differences only in the long-term perspectives, mainly due to varying seroprevalence and vaccine uptake. Thus, the recommendation is to keep case numbers as low as possible to facilitate test-trace-and-isolate programs, reduce mortality and morbidity, and offer better preparedness against emerging variants, potentially escaping immune responses.
Keeping moderate preventive measures in place (such as improved hygiene, use of face masks, and moderate contact reduction) is highly recommended will further facilitate control.
}

\section*{Introduction}
The rising availability of effective vaccines against SARS-CoV-2 promises the lifting of restrictions, thereby relieving the social and economic burden caused by the COVID-19 pandemic. However, it is unclear how fast the restrictions can be lifted without risking another wave of infections; we need a promising long-term vaccination strategy \cite{contreras2021risking}. Nevertheless, a successful approach has to take into account several challenges; vaccination logistics and vaccine allocation requires a couple of months \cite{foy2021vaccination,moore2021vaccination,viana2021controlling}, vaccine eligibility depends on age and eventually serostatus \cite{bubar2021model}, vaccine acceptance may vary across populations \cite{wouters2021challenges}, and more contagious \cite{davies_estimated_2021} and escape variants of SARS-CoV-2 that can evade existing immunity \cite{plante2021variant_gambit,van2020risk} may emerge, thus posing a persistent risk. 
Last but not least, disease mitigation is determined by how well vaccines block infection, and thus prevent the propagation of SARS-CoV-2 \cite{moore2021vaccination,viana2021controlling}, the time to develop effective antibody titers after vaccination, and their efficacy against severe symptoms.
All these parameters will greatly determine the design of an optimal strategy for the transition from epidemicity to endemicity \cite{lavine2021immunological}.

To bridge the time until a significant fraction of the population is vaccinated, a sustainable public health strategy has to combine vaccination with non-pharmaceutical interventions (NPIs). Otherwise it risks further waves and, consequently, high morbidity and excess mortality. 
However, the overall compliance with NPIs worldwide has on average decreased due to a “pandemic-policy fatigue” \cite{petherick2021worldwide}. 
Therefore, the second wave has been more challenging to tame \cite{van2020using} although NPIs, in principle, can be highly effective, as seen in the first wave \cite{dehning2020inferring,brauner_inferring_2020}. After vaccinating the most vulnerable age groups, the urge and social pressure to lift restrictions will increase. However, given the wide distribution of fatalities over age groups and the putative incomplete protection of vaccines against severe symptoms and against transmission, NPIs cannot be lifted entirely or immediately. With our study, we want to outline at which pace restrictions can be lifted as the vaccine roll-out progresses.

Public-health policies in a pandemic have to find a delicate ethical balance between reducing the viral spread and restricting individual freedom and economic activities.
However, the interest of health on the one hand and society and economy on the other hand are not always contradictory. For the COVID-19 pandemic, all these aspects clearly profit from low case numbers~\cite{Priesemann2020panEur,dorn_common_2020,oliu2021sars}, i.e., an incidence where test-trace-and-isolate (TTI) programs can efficiently compensate for local spreading events. The challenge is to reach low case numbers and maintain them \cite{contreras2021challenges,contreras2020low}. Especially with the progress of vaccination, restrictions should be lifted when the threat to public health is reduced. However, the apparent trade-off between public health interest and freedom is not always linear and straightforward. Taking into account that low case numbers facilitate TTI strategies (i.e. health authorities can concentrate on remaining infection chains and stop them quickly)\cite{Kretzschmar2020effectiveness,contreras2021challenges,contreras2020low}, an optimal strategy with a low public health burden \textit{and} large freedom may exist and be complementary to vaccination.

Here, we quantitatively study how the planned vaccine roll-out in the European Union (EU), together with the cumulative post-infection immunity (seroprevalence), progressively allows for lifting restrictions. In particular, we study how precisely the number of contacts can be increased without rendering disease spread uncontrolled over the year 2021. Our study builds on carefully curated epidemiological and contact network data from Germany, France, the UK, and other European countries. Thereby, our work can serve as a blueprint for an opening strategy.
\section*{Analytical framework}

Our analytical framework builds on our deterministic, age-stratified, SEIRD-ICU compartmental model, modified to incorporate vaccination through delay differential equations (schematized in~\figref{fig:WholeModelFlowchart}). It includes compartments for a 2-dose staged vaccine roll-out, immunization delays, intensive care unit (ICU)-hospitalized, and deceased individuals. A central parameter for our model is the gross reproduction number \RtH. It is essentially the time-varying effective reproduction number without considering the effects of immunity nor of TTI. That number depends (among several factors) on i) the absolute number of contacts per individual, and ii) the probability of being infected given a contact. In other words, \RtH is defined as the average number of contacts an infected individual has that would lead to an offspring infection in a fully susceptible population. Therefore, an increase in \RtH implies an increase in contact frequency or the probability of transmission per contact, e.g., due to less mask-wearing.
The core idea is that increasing immunity levels among the population (post-infection or due to vaccination) allows for a higher average number of potentially contagious contacts and, thus, freedom (quantified by \RtH), given the same level of new infections or ICU occupancy. Hence, with immunization progress reducing the susceptible fraction of the population, \RtH can be dynamically increased while maintaining control over the pandemic, i.e., while keeping the effective reproduction number below one (\figref{fig:overview}~A).  

To adapt the gross reproduction number \RtH such that a specific strategy is followed (e.g. staying below TTI or ICU capacity), we include an automatic, proportional-derivative (PD) control system \cite{prakhar2021control}.
This control system allows for steady growth in \RtH as long as it does not lead to overflowing ICUs (or surpassing the TTI capacity). However, when risking surpassing the ICU capacity, restrictions might be tightened again. In that way, we approximate the feedback-loop between political decisions, people's behavior, reported case numbers, and ICU occupancy. 

The basic reproduction number is set to $R_0 = 4.5$, reflecting the dominance of the B.1.1.7 variant \cite{davies_estimated_2021,moore2021vaccination}. We further assume that the reproduction number can be decreased to about 3.5 by hygiene measures, face masks, and mild social distancing. This number is informed by the estimates of Sharma et al. \cite{sharma_understanding_2021}, who estimate the combined effectiveness of mask wearing, limiting gatherings to at most 10 people and closing night clubs to a reduction of about 20--40$\%$, thus leading to a reproduction number between 2.7 and 3.6. We use a conservative estimate, as this is only a exemplary set of restrictions. Therefore, we restrict \RtH in general not to exceed 3.5 (\figref{fig:overview}~C). All other parameters (and their references) are listed in~\tabref{tab:Parametros}.

Efficient TTI contributes to reducing the effective reproduction number. Hence, it increases the average number of contacts (i.e., \RtH) that people may have under the condition that case numbers remain stable (Fig.~\ref{fig:overview}A) \cite{contreras2021challenges}. This effect is particularly strong at low case numbers, where the health authorities can concentrate on tracing every case efficiently \cite{contreras2020low}. Here, we approximate the effect of TTI on \RtH semi-analytically to achieve an efficient implementation (see Methods).

\begin{figure}[!ht]
\hspace*{-1cm}
    \centering
    \includegraphics[width=18.5cm]{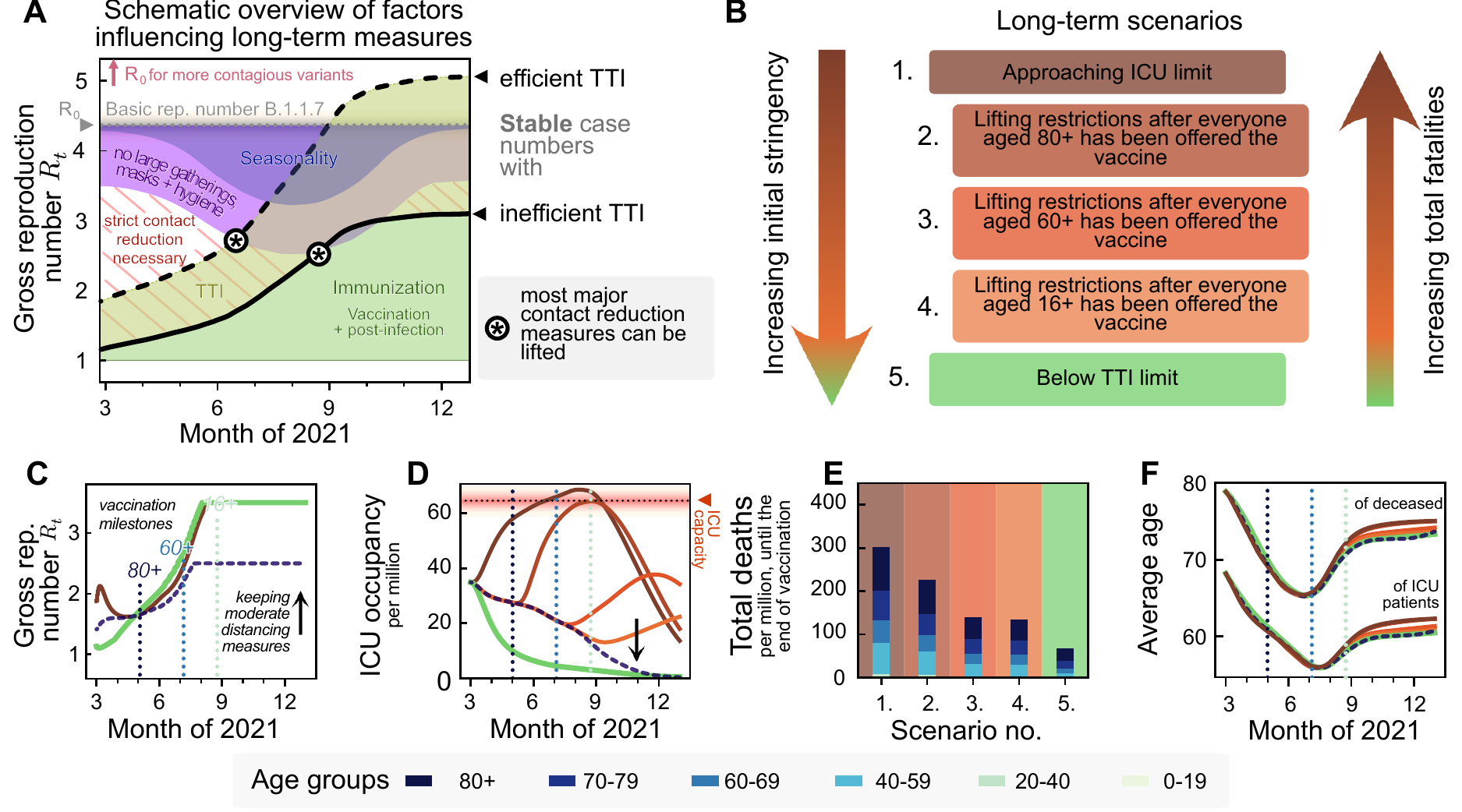}
    \caption{%
        \textbf{With progressing vaccination in the European Union, a slow but steady increase in freedom will be possible.
        However, premature lifting of NPIs considerably increases the total fatalities without a major reduction in restrictions in the middle term.}
        \textbf{A:} A schematic outlook into the effect of vaccination on societal freedom. Freedom is quantified by the maximum time-varying gross reproduction number (\RtH) allowed to sustain stable case numbers. As \RtH does not consider the immunized population, gross reproduction numbers above one are possible without rendering the system unstable. A complete return to pre-pandemic behavior would be achieved when \RtH reaches the value of the basic reproduction number $R_0$ (or possibly at a lower value due to seasonality effects during summer, purple-blue shaded area). The thick full and dashed lines indicate the gross reproduction number \RtH allowed to sustain stable case numbers if  test-trace-isolate (TTI) programs are inefficient and efficient, respectively, which depends on the case numbers level. Increased population immunity (green) is expected to allow for lifting the most strict contact reduction measures while only keeping mild NPIs (purple) during summer 2021 in the northern hemisphere. Note that seasonality is not explicitly modeled in this work. See Supplementary Fig~S4 for an extended version including the year 2020. 
        \textbf{B:} We explore five different scenarios for lifting restrictions in the EU, in light of the EU-wide vaccination programs. We sort them according to the initial stringency that they require and the total fatalities that they may cause. One extreme (scenario 1) offers immediate (but still comparably little) freedom by approaching ICU-capacity limits quickly. The other extreme (scenario 5) uses a strong initial reduction in contacts to allow long-term control at low case numbers. Finally, the intermediate scenarios initially maintain moderate case numbers and lift restrictions at different points in the vaccination program. 
        \textbf{C:} All extreme strategies allow for a steady noticeable increase in contacts in the coming months (cf.\ panel \textbf{A}), but vary greatly in the (\textbf{D}) ICU-occupancy profiles and (\textbf{E}) total fatalities. 
        \textbf{F:} Independent on the strategy, we expect a transient but pronounced decrease in the average age of ICU patients and deceased over the summer.
        }
    \label{fig:overview}
\end{figure}

For vaccination, we use as default parameters an average vaccine efficacy of 90\% protection against severe illness \cite{dagan_bnt162b2_2021} and of 75\% protection against infection \cite{levine2021decreased}. We further assume that vaccinated individuals with a breakthrough infection carry a lower viral load and thus are 50\% less infectious \cite{harris2021impact} than unvaccinated infected individuals. We assume an average vaccine uptake of 80\% \cite{covimo} that varies across age (see~\tabref{tab:vaccine_uptake}), and an age-prioritized vaccine delivery as described in the Methods section. In detail, most of the vaccines are distributed first to the age group 80+, then 70+, 60+, and then to anyone of age 16+. A small fraction of the weekly available vaccines is distributed randomly (e.g. because of profession). After everyone got a vaccine offer roughly by the end of August, we assume no further vaccination (see \figref{fig:two_extreme_strategies}L). The daily amount of vaccine doses per million is derived from German government projections, but is expected to be similar across the EU. For the course of the disease, the age-dependent fraction of non-vaccinated, infected individuals requiring intensive care is estimated from German hospitalization data, using the infection-fatality-ratio (IFR) reported in \cite{odriscoll_age-specific_2021} (see~\tabref{tab:age_dep_ICU_rates} and Methods).

In our default scenario we use a contact structure between age groups as measured during pre-pandemic times \cite{mistry2021inferring}. However, we halve the infection probability in the 0--19 year age group to account for reduced in-person classes and better ventilation and systematic random screening in school settings using rapid COVID-19 tests. Under these assumptions, the infection probability among the 0--19 age group is similar to the one among the 20--39 and 40--69 age groups (\figref{fig:Contact_Structure}). 
We start our simulations at the beginning of March 2021, with an incidence of 200 daily infections per million, two daily deaths per million, an ICU occupancy of 30 patients per million, a seroprevalence of 10\%, and about 4\% of the population already vaccinated. This is comparable to German data (assuming a case under-reporting factor of 2, which had been measured during the first wave in Germany\cite{rki_SOEP}) and typical for EU countries at the beginning of March 2021 (further details in the Methods section). We furthermore explore the impact of important differences between EU countries, namely the seroprevalance by the start of the vaccination program, demographics, and vaccine uptake exemplary for Finland, Italy and the Czech Republic in addition to the default German parameters.

\begin{table}\caption{
    \textbf{Age-dependent infection-fatality-ratio (IFR), probability of requiring intensive care due to the infection (ICU probability) and ICU fatality ratio (ICU-FR).} The IFR is defined as the probability of an infected individual dying, whereas the ICU-FR is defined as the probability of an infected individual dying while receiving intensive care.
}
\label{tab:age_dep_ICU_rates}
\centering
\begin{tabular}{p{1.5cm} p{2.5cm} p{2.5cm} p{1.5cm} p{2.5cm}}\toprule
 Age & IFR (\cite{odriscoll_age-specific_2021}) &ICU probability  & ICU-FR & \makecell[c]{Avg. ICU time \\ (days)}   \\ \midrule
  0-19  & \num{0.00002} & \num{0.00014} & \num{0.0278}  &\num{5}  \\
 20-39  & \num{0.00022} & \num{0.00203} & \num{0.0389}  &\num{5}  \\
 40-59  & \num{0.00194} & \num{0.01217} & \num{0.0678}  &\num{11} \\
 60-69  & \num{0.00739} & \num{0.04031} & \num{0.1046}  &\num{11} \\
 70-79  & \num{0.02388} & \num{0.05435} & \num{0.1778}  &\num{9}  \\
   $>$80  & \num{0.08292} & \num{0.07163} & \num{0.4946}  &\num{6}  \\ \midrule
Average & \num{0.00957} & \num{0.02067} & \num{0.0969}  &\num{9}  \\ \bottomrule
\end{tabular}
\end{table}

\section*{Results}
\subsection*{Aiming for low case numbers has the best long-term outcome}

We first present the two extreme scenarios: case numbers quickly rise so that the ICU capacity limit is approached (scenario 1), or case numbers quickly decline below the TTI capacity limit (scenario 5; Fig. \figref{fig:two_extreme_strategies}). We set the ICU capacity limit at 65 patients/million, reflecting the maximal occupancy and improved treatments during the second wave in Germany \cite{karagiannidis2021major} and use German demographics. The incidence (daily new cases) limit up to which TTI is fully efficient is set to 20 daily infections per million~\cite{Priesemann2020panEur}, but depends strongly on the gross reproduction number, as described in Methods. 

The first scenario (`approaching ICU limit', \figref{fig:two_extreme_strategies}A--D) maximizes the initial freedom individuals might have (quantified as the allowed gross reproduction number \RtH). However, the gained freedom is only transient as, once ICUs approach their capacity limit, restrictions need to be tightened (\figref{fig:two_extreme_strategies}~J,K). Additionally, stabilization at high case numbers leads to many preventable fatalities, especially in light of likely temporary overflows of the ICU capacity due to the hard-to-control nature of high case numbers.

The fifth scenario (`below TTI limit', \figref{fig:two_extreme_strategies}E--F) requires maintaining stronger restrictions for about two months to lower case numbers below the TTI capacity. Afterward, the progress of the vaccination allows for a steady increase in \RtH while keeping case numbers low, enabling TTI to contribute to the containment effectively. From May 2021 on, this fifth scenario would allow for slightly more freedom, i.e., a higher \RtH, than the first scenario (Fig. \ref{fig:overview}C). Furthermore, this scenario reduces morbidity and mortality: Deaths until the of the vaccination period (end of August) are reduced by a factor of five, total infections even by a factor of eleven. Due to the prioritization of the elderly in vaccination, the average age of ICU patients and fatalities drops by roughly 12 and 15 years, respectively, independently of the choice of scenarios (\figref{fig:two_extreme_strategies}~I).
Overall, the low-case-number scenario thus allows for a very similar increase in freedom over the whole time frame (quantified as the increase in \RtH) and implies about fives times fewer deaths by the end of the vaccination program compared to the first scenario with high case numbers (\figref{fig:two_extreme_strategies}~K).

The vaccine uptake has little influence on the number of deaths and total cases during the vaccination period (\figref{fig:two_extreme_strategies}~J,K), mainly because restrictions are quickly enacted when reaching the ICU capacity. However, uptake becomes a crucial parameter; It controls the pandemic progression after completing the vaccine roll-out as it determines the residual susceptibility of the population (cf.~\figref{fig:after_vaccination}). With insufficient vaccination uptake, a novel wave will follow as soon as restrictions are lifted~\cite{moore2021vaccination}.

\begin{figure}[!ht]
\hspace*{-1cm}
    \centering
    \includegraphics[width=18.5cm]{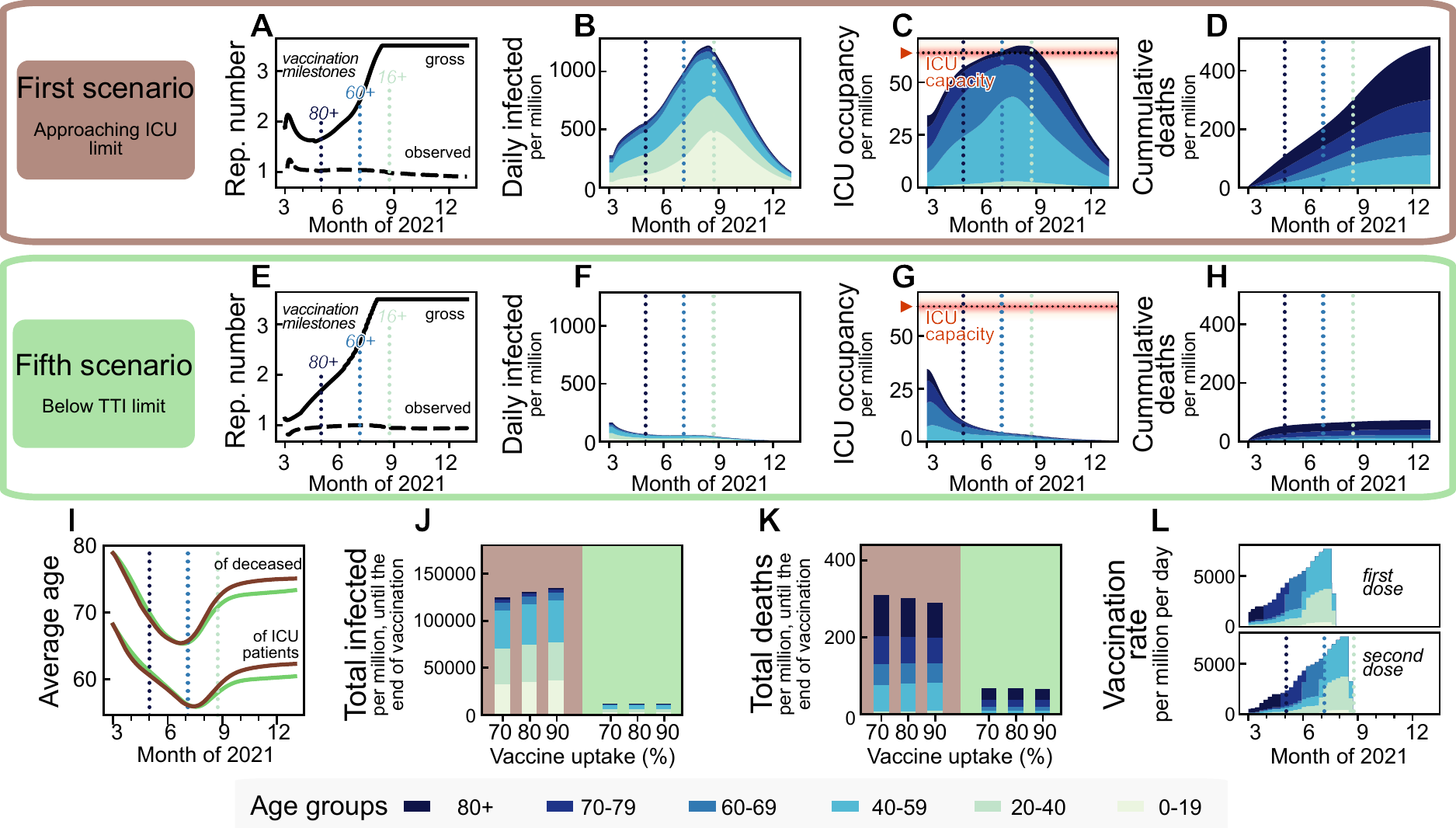}
    \caption{%
        \textbf{Maintaining low case numbers during vaccine roll-out reduces the number of ICU patients and deaths by about a factor five compared to quickly approaching the ICU limit while hardly requiring stronger restrictions.} 
        Aiming to maximize ICU occupancy (\textbf{A--D}) allows for a slight increase of the allowed gross reproduction number \RtH early on, whereas lowering case numbers below the TTI capacity limit (\textbf{E--H}) requires comparatively stronger initial restrictions. Afterwards, the vaccination progress allows for a similar increase in freedom (quantified by increments in \RtH) for both strategies, starting approximately in May 2021. 
        \textbf{B--D, F--H:} These two strategies lead to a completely different evolution of case numbers, ICU occupancy, and cumulative deaths, but differ only marginally in the evolution of the average age of deceased and ICU patients (\textbf{I}), as the latter is rather an effect of the age-prioritized vaccination than of a particular strategy. 
        \textbf{J,K:} The total number of cases until the end of the vaccination period (of the 80\% uptake scenario, i.e., end of August, the rightmost dotted light blue line in sub-panels \textbf{A--H}) differ by a factor of eleven between the two strategies, and the total deaths by a factor of five. Vaccine uptake (i.e., the fraction of the eligible, 16+, population that gets vaccinated) has a minor impact on these numbers until the end of the vaccine roll-out but determines whether a wave would follow afterward (see \figref{fig:after_vaccination}). 
        \textbf{L:} Assumed vaccination rate as projected for Germany, which is expected to be similar across the European Union. For a full display of the time-evolution of the compartments for different uptakes see Supplementary Figs~S6--S8.
        }
    \label{fig:two_extreme_strategies}
\end{figure}

\subsection*{Maintaining low case numbers at least until vulnerable groups are vaccinated is necessary to prevent a severe further wave}

Between the two extreme scenarios 1 and 5, which respectively allow maximal or minimal initial freedom, we explore three alternative scenarios, where the vaccination progress and the slow restriction lifting roughly balance out (Figs.~\ref{fig:lifting_restrictions}, \ref{fig:overview}~B). These scenarios assume approximately constant case numbers and then a swift lifting of most of the remaining restrictions within a month after three different vaccination milestones:  
when the age group 80+ has been vaccinated (scenario 2, \figref{fig:lifting_restrictions}~A--D),  
when the age groups 60+ has been vaccinated (scenario 3, \figref{fig:lifting_restrictions}~E--H) and
when the entire adult population (16+) has been vaccinated (scenario 4, \figref{fig:lifting_restrictions}~I--L). 

The relative freedom gained by lifting restrictions early in the vaccination timeline (scenario 2) hardly differs from the freedom gained from the other two scenarios (\figref{fig:lifting_restrictions}~M), as since new contact restrictions need to take place once reaching the ICU capacity limit, and the initial freedom is partly lost. Significantly, lifting restrictions later reduces the number of infections and deaths by more than 50\% and 35\% respectively if case numbers have been kept at a moderate level (250 daily infections per million) and by more than 85\% and 65\%, respectively if case numbers have been kept at a low level (50 per million) beforehand (\figref{fig:lifting_restrictions}~N,O). Lifting restrictions entirely after either offering vaccination to everyone aged 60+ or everyone aged 16+ only changes the total fatalities by a small amount, mainly because the vaccination pace is planned to be quite fast by then, and the 60+ age brackets make up the bulk of the highest-risk groups. Hence, a potential subsequent wave only unfolds after the end of the planned vaccination campaign (\figref{fig:lifting_restrictions}~F-H). Thus, with the current vaccination plan, it is recommended to keep case numbers at moderate or low levels, at least until the population at risk and people of age 60+ have been vaccinated. 

If maintaining low or intermediate case numbers in the initial phase, vaccination starts to decrease the ICU occupancy considerably in May 2021 (\figref{fig:lifting_restrictions}~G,K). However, This decrease in ICU occupancy must not be mistaken for a generally stable situation. As soon as restrictions are relaxed too quickly, ICU occupancy surges again (\figref{fig:lifting_restrictions}~C,G,H), without any relevant gain in freedom for the total population. Nonetheless, the progress in vaccination will, in any case, allow lifting restrictions gradually. 
 
\begin{figure}[!ht]
\hspace*{-1cm}
    \centering
    \includegraphics[width=18.5cm]{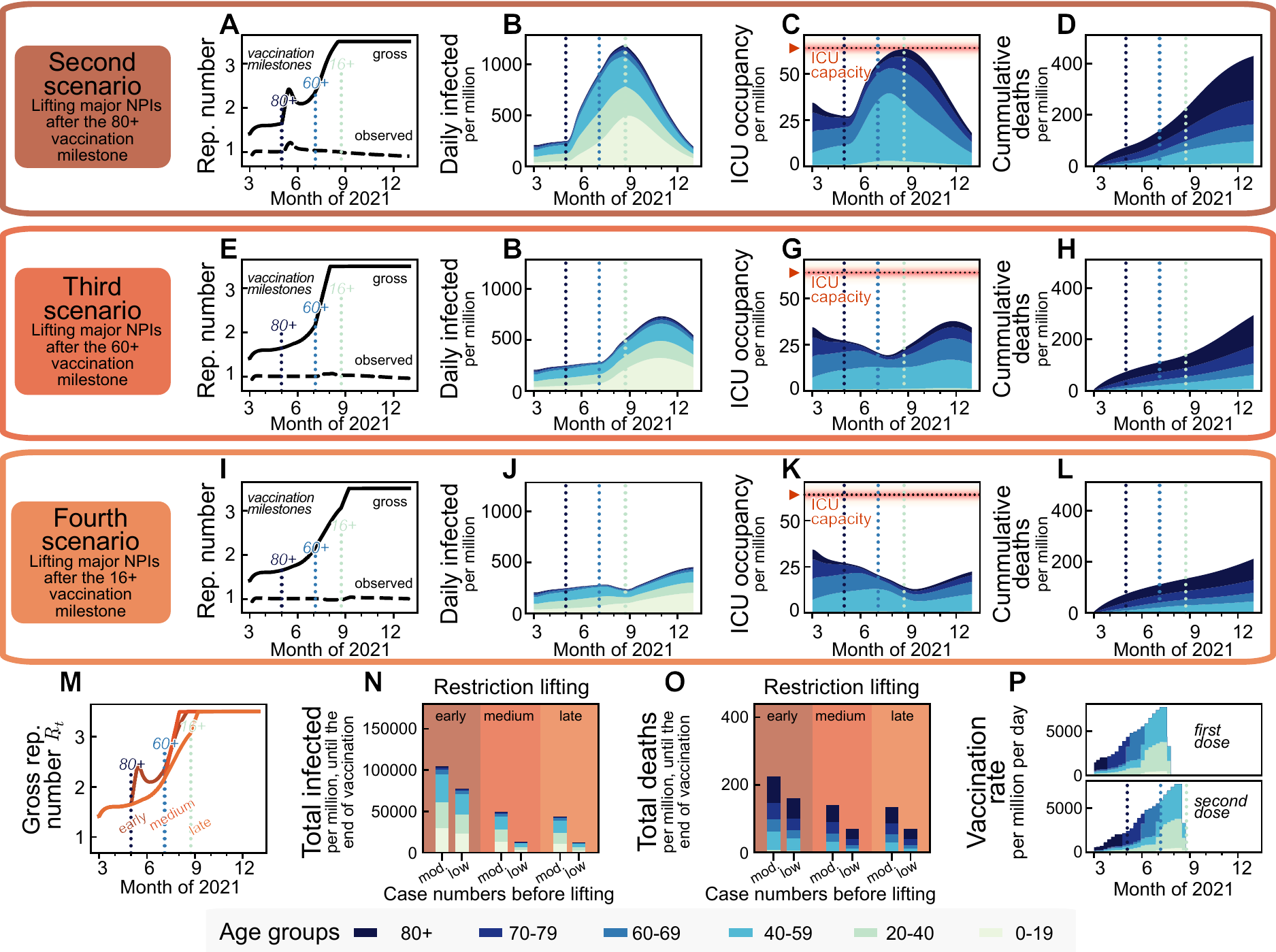}
    \caption{%
        \textbf{Vaccination offers a steady return to normality until the end of summer 2021 in the northern hemisphere, no matter whether a transient easing of restrictions is allowed earlier or later (second and fourth scenario, respectively). However, lifting restrictions later reduces fatalities by more than 35\%.}
        We assume that the vaccine immunization progress is balanced out by a slow lifting of restrictions, keeping case numbers at a moderate level ($\leq 250$ daily new cases per million people). We simulated lifting all restrictions within a month starting from different time points: when (\textbf{A--D}) the 80+ age group, (\textbf{E--H}) the 60+ age group or (\textbf{I--L}) everyone 16+ has been offered vaccination. Restriction lifting leads to a new surge of cases in all scenarios. New restrictions are put in place if ICUs would otherwise collapse.
        \textbf{M:} Lifting all restrictions too early increases the individual freedom only temporarily before new restrictions have to be put in place to avoid overwhelming ICUs. Overall, trying to lift restrictions earlier has a small influence on the additional increase in the allowed gross reproduction number \RtH. 
        \textbf{N,O:} Relaxing major restrictions only medium-late or late reduces fatalities by more than 35\% and infections by more than 50\%. Fatalities and infections can be cut by an additional factor of more than two when aiming for a \textit{low} (50 per million) instead of \textit{moderate} (250 per million) level of daily infections before major relaxations.
        \textbf{P:} Assumed daily vaccination rates, same as in \autoref{fig:two_extreme_strategies}. 
        }
    \label{fig:lifting_restrictions}
\end{figure}

\subsection*{The long-term success of the vaccination campaign strongly depends on vaccine uptake and vaccine efficacy}

The vaccination campaign's long-term success will depend on both people's vaccine uptake (see \tabref{tab:vaccine_uptake}) and the efficacy of the vaccine against those variants of SARS-CoV-2 prevalent at the time of writing of this paper. A vaccine's efficacy has two contributions: first, vaccinated individuals become less likely to develop severe symptoms and require intensive care \cite{polack2020safety,voysey2021safety, israel2021effectiveness} (vaccine efficacy, \avProtection). Second, a fraction \fracImmun of vaccinated individuals gains sterilizing immunity, i.e., is completely protected against infections and does not contribute to viral spread at all \cite{levine2021decreased,petter2021initial}. We also assume that breakthrough infections among vaccinated individuals would bear lower viral loads, thus exhibit reduced transmissibility \cite{harris2021impact} (reduced viral load, \virload). However, the possibly reduced effectiveness of vaccines against current variants of concern (VOCs), e.g., B.1.351 and P.1 \cite{voysey2021safety, altmann_immunity_2021, wang2021antibody}, and potential future VOCs render long-term scenarios about the success of vaccination uncertain.

Therefore, we explore different parameters of vaccine uptake and effectiveness. We quantify the success, or rather the lack of success of the vaccination campaign by the duration of the period where ICUs function near capacity limit, until population immunity is reached. Two different scenarios are considered upon finishing the vaccination campaign: in the first scenario, most restrictions are lifted, like in the previous scenarios (\figref{fig:after_vaccination}~B). In the second, restrictions are only lifted partially, to a one third lower gross reproduction number ($\RtH=2.5$) (\figref{fig:after_vaccination}~C).  This second scenario presents the long-term maintenance of moderate social distancing measures, including the restriction of large gatherings to smaller than 100 people, encouraging home-office, enabling effective test-trace-and-isolate (TTI) programs at very low case numbers, and supporting hygiene measures and face mask usage. Fig~\ref{fig:after_vaccination}B,C indicates how long ICUs are expected to be full in both scenarios, and for different parameters of vaccine efficacy (which may account for the emergence of vaccine escape variants).

The primary determinant for the success of vaccination programs after lifting most restrictions is the vaccine uptake among the population aged 20+; only with a high vaccine uptake ($>90\%$) we can avoid a novel wave of full ICUs (default parameters as in scenario 3; \figref{fig:after_vaccination}B, $\avProtection=90\%$, $\fracImmun=75\%$). However, if vaccine uptake was lower or vaccines prove to be less effective against prevalent or new variants, lifting most restrictions would imply that ICUs will work at the capacity limit for months.

In contrast, maintaining moderate social distancing measures (\figref{fig:after_vaccination}~C) may prevent a wave after completing the vaccine roll-out. This strategy can also compensate for a low vaccine uptake, requiring only about 55\% uptake to avoid surpassing ICU capacity for our default parameters. Nonetheless, any increase in vaccine uptake lowers intensive care numbers, increases freedom, and most importantly, provides better protection in case of the emergence of escape variants, as this would involve an effective reduction of vaccine efficacy (dashed lines). A full exploration of vaccine efficacy parameter combinations and different contact structures is presented in Supplementary~Fig~S2.

\begin{figure}[!ht]
    \centering
    \includegraphics[width=14.3cm]{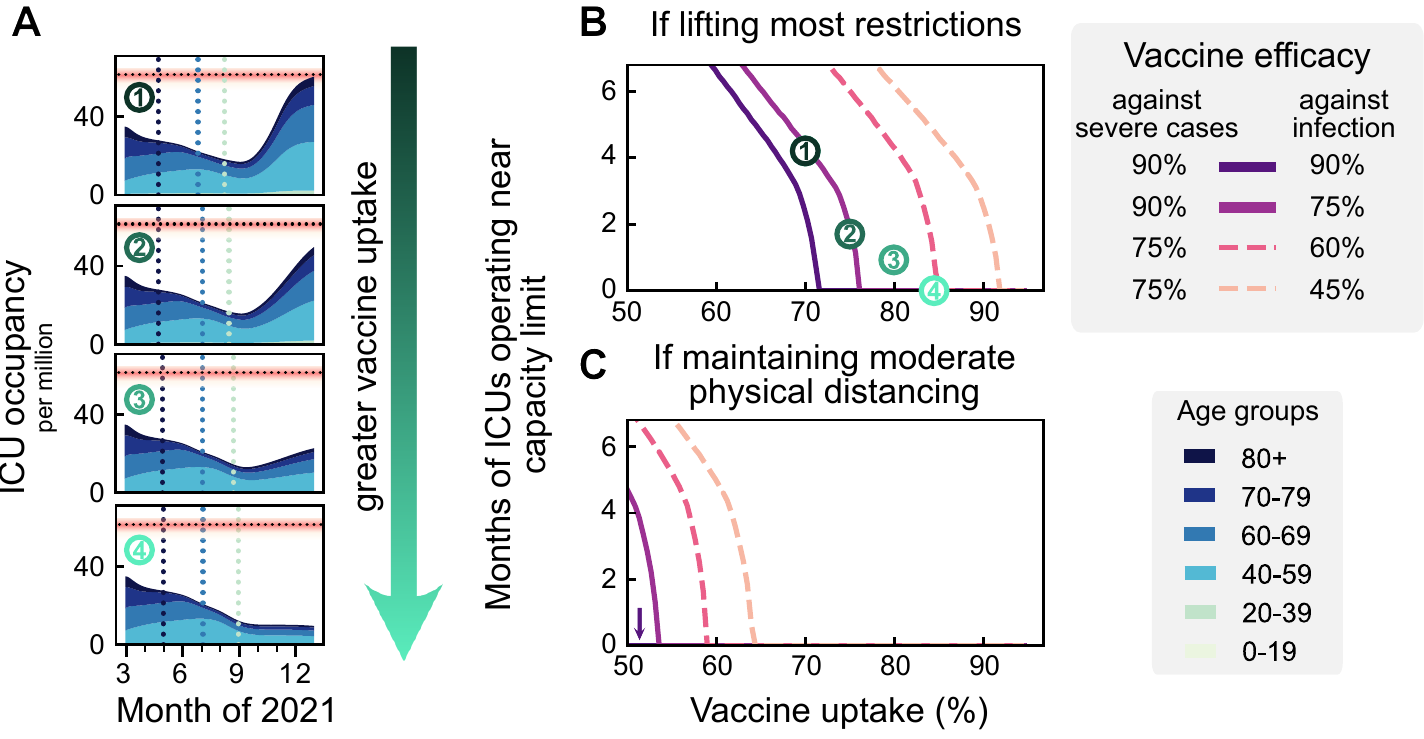}
    \caption{%
        \textbf{A high vaccine uptake ($>90\%$ or higher among the eligible population) is crucial to prevent a wave when lifting restrictions after completing vaccination campaigns.} 
        \textbf{A:} We assume that infections are kept stable at 250 daily infections until all age groups have been vaccinated. Then restrictions are lifted, leading to a wave if the vaccine uptake has not been high enough (top three plots).
        \textbf{B:} The duration of the wave (measured by the total time that ICUs function close to their capacity limit) depends on vaccine uptake and vaccine efficacy. We explore the dependency on the efficacy both for preventing severe cases (full versus dashed lines) and preventing infection (shades of purple). The dashed lines might correspond to vaccine efficacy in the event of the emergence of escape variants of SARS-CoV-2.
        \textbf{C:} If some NPIs remain in place (such that the gross reproduction number stays at $R_t = 2.5$), ICUs will not overflow even if the protection against infection is only around 60\%. See Supplementary~Fig~S2 for all possible combinations of vaccine efficacies, also in the event of different contact structures.  
        }
    \label{fig:after_vaccination}
\end{figure}

Heterogeneity among countries on an EU-wide level will affect the probability and strength of a new wave after completing vaccination campaigns. We chose some exemplary European countries to investigate how our results depend on age demographics, contact structure, and the degree of initial post-infection immunization (seroprevalence). We obtained the seroprevalence in the different countries by scaling the German 10\% seroprevalence with the relative differences in cumulative reported case numbers between Germany and the other countries, i.e., we assume the under-reporting factor to be roughly the same across the chosen countries. All other parameters are left unchanged. Specifically, we leave the capacities of the health systems at the estimated values for Germany, as lacking TTI data and varying definitions of ICU treatment make any comparison difficult. We repeated the analysis presented above (\figref{fig:after_vaccination}) for Finland, Italy and the Czech Republic (see~\figref{fig:after_vaccination_EU}~A--D). Germany, Finland, and Italy would need a similarly high vaccine uptake in the population to prevent another severe wave. In the Czech Republic, a much smaller uptake is sufficient. The largest deviations in the necessary vaccine uptake are due to the initial seroprevalence, which we estimate to range from $5\%$ in Finland to $30\%$ in the Czech Republic. In contrast, the differences in age demographics and contact structures only have a minor effect on the dynamics (see also Supplementary Fig~S1). 

If no further measures remain in place to reduce the potential contagious contacts in school settings, the young age group (0--19 years) will drive infections after completing the vaccination program as they remain mostly unvaccinated. The combination of intense contacts and high susceptibility among school-aged children considerably increase the vaccine uptake required in the adult population to restrain a further wave (\figref{fig:after_vaccination_EU}~E--H). High seroprevalence, also in this age group, reduces the severity of this effect for the Czech Republic (\figref{fig:after_vaccination_EU}~H).

\begin{figure}[!ht]
\hspace*{-1cm}
    \centering
    \includegraphics[width=18.5cm]{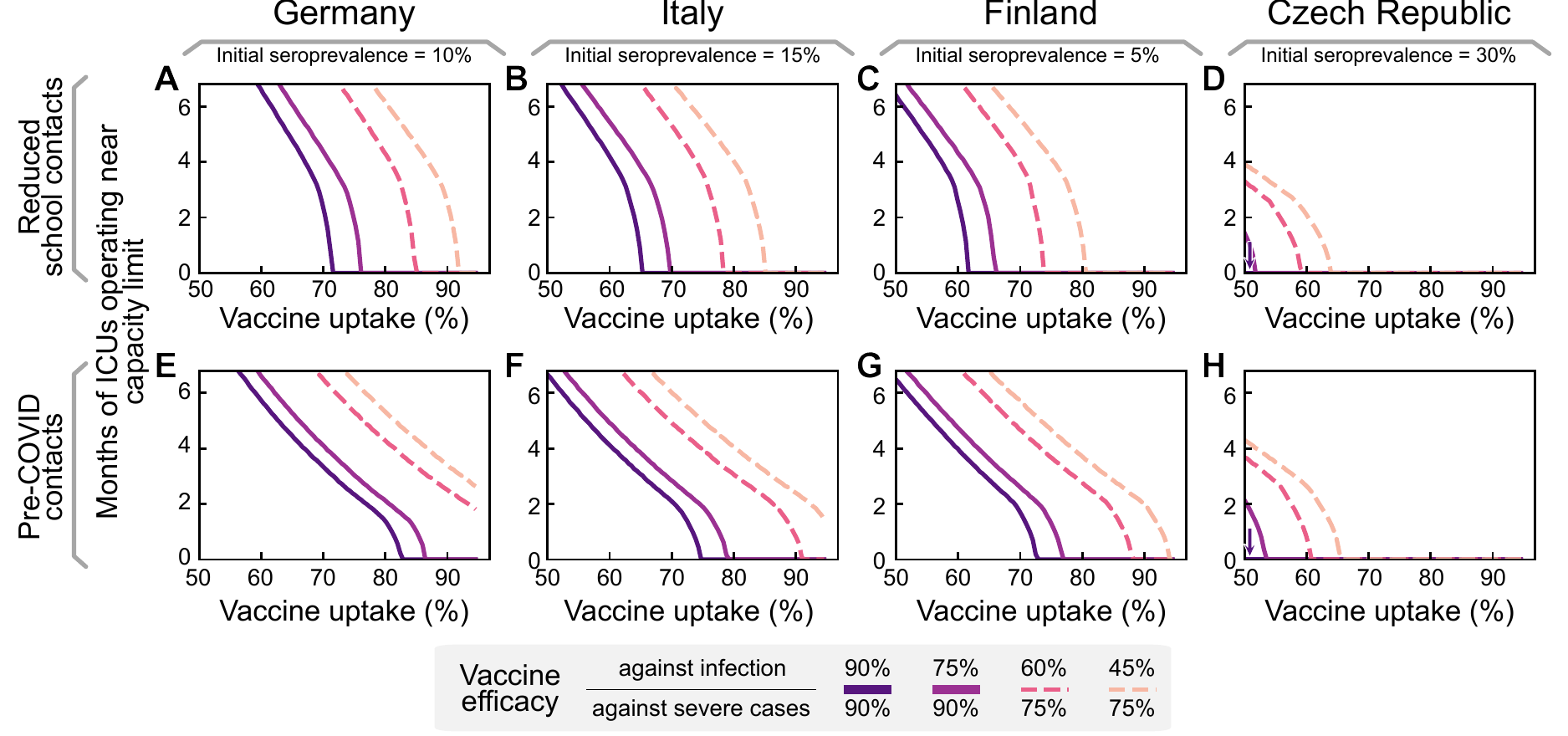}
    \caption{%
        \textbf{Seroprevalence and different demographics across EU countries determine the vaccine uptake required for population immunity.} As in \figref{fig:after_vaccination}~B, we assume that case numbers are stable at 250 daily infections per million per day until the end of vaccination, when most restrictions are lifted (such that the gross reproduction number goes up to 3.5). We vary the initial seroprevalence and age demographics and contact structures to represent German, Italian, Finnish, and Czech data. \textbf{A--D:} Projected ICU occupancy in a subsequent wave depending on vaccine uptake, assuming reduced transmission risk in schools but otherwise default pre-pandemic contact structures.
        \textbf{E--H:} Projected ICU occupancy depending on vaccine uptake, assuming default pre-pandemic contact structures everywhere (including schools).
        See Supplementary~Fig~S3 for a more comprehensive exploration of combinations of vaccine efficacies.  
        }
    \label{fig:after_vaccination_EU}
\end{figure}

\section*{Discussion}

Our results demonstrate that the pace of vaccination first and foremost determines the expected gain in freedom (i.e., lifting of restrictions) during and after completion of the COVID-19 vaccination programs. Any premature lifting of restrictions risks another wave with high COVID-19 incidence and full ICUs. Moreover, the increase in freedom gained by these premature strategies is only transient because once ICU capacity is reached again, restrictions would have to be reinstated. Simultaneously, these early relaxations significantly increase morbidity and mortality rates, as a fraction of the population has not yet been vaccinated and thus remains susceptible. In contrast, maintaining low case numbers avoids another wave, and \textit{still} allows to lift restrictions steadily and at a similar pace as with high case numbers. Despite this qualitative behavior being general, the precise quantitative results depend on several parameters and assumptions, which we discuss in the following.   

The specific time evolution of the lifting of restrictions is dependent on the progress of the vaccination program. Therefore, a steady lifting of restrictions may start in May 2021, when the vaccination rate in the European Union gains speed. However, if the vaccination roll-out stalls more than we assume, the lifting of restrictions has to be delayed proportionally. In such a slowdown, the total number of cases and deaths until the end of the vaccination period increases accordingly. Thus, cautious lifting of restrictions and a fast vaccination delivery is essential to reduce death tolls and promptly increase freedom.

The spreading dynamics after concluding vaccination campaigns (\figref{fig:after_vaccination}~B,C) will be mainly determined by i) final vaccine uptake, ii) the contact network structure, iii) vaccine effectiveness, and iv) initial seroprevalence. Regarding vaccine uptake, we assumed that after the vaccination of every willing person, no further people would get vaccinated. This assumption enables us to study the effects of each parameter separately. However, vaccination willingness might change over time: it will probably be higher if reported case numbers and deaths are high, and vice versa. This poses a fundamental challenge: If low case numbers are maintained during the vaccine roll-out, the overall uptake might be comparably low, thus leading to a more severe wave once everyone has received a vaccination offer and restrictions are fully lifted. In contrast, a severe wave during vaccine roll-out might either increase vaccine uptake, because of individuals looking to protect themselves, or reduce it, because of damaged credibility on vaccine efficacy among vaccine hesitant groups. Thus, to avoid any further wave, policymakers have to maintain low case numbers \textit{and} foster high vaccine uptake.

Besides vaccine uptake, the population's contact network also determines whether population immunity will be reached. We studied different real-world and theoretical possibilities for the contact matrices in Germany and other EU countries (cf.~\figref{fig:Contact_Structure}) and evaluated how our results depend on the connectivity among age groups. For the long-term success of the vaccination programs, there must be exceptionally sensible planning of measures to prevent contagion among school-aged children. Otherwise, they could become the drivers of a novel wave because they might remain mostly unvaccinated. Provided adequate vaccine uptake among the adult population, our results suggest that reducing either the intensity of contacts or the infectiousness in that age group by half would be sufficient for preventing a rebound wave. This reduction is attainable by implementing soft-distancing measures, plus systematic, preventive random screening with regular COVID-19 rapid tests in school settings or via vaccination \cite{sharma_understanding_2021}. Although at the time of writing some vaccines have been provisionally approved for use in children aged 12--15 years old, vaccine uptake among children remains highly uncertain because of their very low risk for severe illness from COVID-19. We therefore did not include their vaccination in our model.

One of the largest uncertainties regarding the dynamics after vaccine roll-out arises from the efficacies of the vaccines. First, the sterilizing immunity effect (i.e., blocking the transmission of the virus), is still not well quantified and understood \cite{levine2021decreased}. Second, the emergence of new viral variants that at least partially escape immune response is continuously under investigation \cite{altmann_immunity_2021, garcia-beltran_multiple_2021, tarke_negligible_2021}. Furthermore, there is no certainty about whether escape variants might produce a more severe course of COVID-19 or whether reinfections with novel variants of SARS-CoV-2 would be milder. Therefore, we cannot conclusively quantify the level of contact reductions necessary in the long term to avoid a further wave of infections or whether such wave would overwhelm ICUs. However, for our default parameters, moderate contact reductions and hygiene measures would be sufficient to prevent further waves. 

Although most examples are presented for countries from the European Union, our results can also be generalized to other countries. Differences across countries come from i)~demographics, ii)~varying seroprevalence ---which originated from large differences in the severity of past waves---, iii)~vaccines (types, availability, delivery scheme, and uptake), as well as iv)~capacities of the health systems, including hospitals and TTI capabilities. For the EU, we find that during the mass vaccination phase, all these differences have only a minor effect on the pace at which restrictions can be lifted (cf. Supplementary~Fig~S1). However, differences become evident in the long term when most restrictions are lifted by the end of the vaccination campaigns. Demographics and contact patterns are qualitatively very similar across EU countries (cf.~\figref{fig:Contact_Structure}) and thus do not strongly change the expected outcome. On the contrary, we found the initial seroprevalence to significantly determine the minimum vaccine uptake required to guard against further waves after the vaccine roll-out (cf.~\figref{fig:after_vaccination_EU}). 
Naturally acquired immunity, like vaccinations, contributes to reducing the overall susceptibility of the population and thus impedes viral spread. Notably, naturally acquired immunity can compensate for drops in vaccine uptake in specific age groups unwilling to vaccinate or that cannot access the vaccine, e.g. in children.
Furthermore, expected vaccine uptake considerably varies across EU countries (e.g., Serbia 38\%, Croatia 41\%, France 44\%, Italy 70\%, Finland 81\% \cite{wouters2021challenges}, Czech Republic 40\%\cite{hutt2021}, Germany 80\%\cite{covimo}). The risk of rebound waves after the mass vaccinations might thus be highly heterogeneous across the EU.

Since we neither know what kind of escape variants might still surface nor their potential impact on vaccine efficacies or viral spread, maintaining low case numbers is the safest strategy for long-term planning. This strategy i) prevents avoidable deaths during vaccine roll-out, ii) offers better preparedness should escape variants emerge, and iii) lowers the risk of further waves because local outbreaks are easier to contain with efficient TTI. Hence, low case numbers only have advantages for health, society, and the economy. Furthermore, a low case number strategy would greatly profit from an EU-wide commitment, and coordination \cite{Priesemann2020panEur}. Otherwise, strict border controls with testing and quarantine policies need to be installed as drastically different case numbers between neighboring countries or regions promote destabilization; infections could (and will) propagate between countries triggering a \textquote{ping-pong} effect, especially if restrictions are not jointly planned. Therefore, promoting a high vaccine uptake and low case numbers strategy should not only be a priority for each country but also for the whole European community.

In practice, there are several ways to lower case numbers to the capacity limit of TTI programs without the need to enact stringent NPIs immediately. For example, if restrictions are lifted gradually but marginally slower than the rate vaccination pace would allow, case numbers will still decline. Alternatively, restrictions could be relaxed initially to an intermediate level where case numbers do not grow exponentially while giving people some freedom. In such circumstances one can take advantage of the reduced susceptibility to drive case numbers down without the need of stringent NPIs (Supplementary Fig~S5 E--H). 

To conclude, the opportunity granted by the progressing vaccination should not only be used to lift restrictions carefully but also to bring case numbers down. This will significantly reduce fatalities, allow to lift all major restrictions gradually moving into summer 2021, and guard against newly-emerging variants or potential further waves in the EU.

\section*{Methods}

\subsection*{Model overview} 

We model the spreading dynamics of SARS-CoV-2 following a SEIRD-ICU deterministic formalism through a system of delay differential equations. Our model incorporates age-stratified dynamics, ICU stays, and the roll-out of a 2-dose vaccine. For a graphical representation of the infection and core dynamics, see~\figref{fig:WholeModelFlowchart}. The contagion dynamics include the effect of externally acquired infections as a non-zero influx $\Phii$ based on the formalism previously developed by our group \cite{contreras2021challenges,contreras2020low}: susceptible individuals of a given age group $i$ (\Si) can acquire the virus from infected individuals from any other age group $j$ and subsequently progress to the exposed ($\Si\to\Ei$) and infectious ($\Ei\to\Ii$) compartments. They can also acquire the virus externally. However, in this case, they progress directly to the infectious compartment ($\Si\to\Ii$), i.e., they get infected abroad, and by the time they return, the latent period is already over. Individuals exposed to the virus (\Ei) become infectious after the latent period and thus progress from the exposed to the infectious compartments (\Ii) at a rate \latRate ($\Ei\to\Ii$). The infectious compartment has three different possible transitions: i) direct recovery ($\Ii\to\Ri$), ii) progression to ICU ($\Ii\to\ICUi$) or iii) direct death ($\Ii\to\Di$). Individuals receiving ICU treatment can either recover ($\ICUi\to\Ri$) or decease ($\ICUi\to\Di$).

A contact matrix weights the infection probability between age groups. We investigated three different settings for the contact structure to assess its impact on the spreading dynamics of COVID-19: i) Interactions between age groups are proportional to the group size, i.e., the whole population is mixed perfectly homogeneously,  ii) interactions are proportional to pre-COVID contact patterns in the EU population \cite{mistry2021inferring}, and iii) interactions are proportional to ``almost'' pre-COVID contact patterns \cite{mistry2021inferring}, i.e., the contact intensity in the youngest age group (0--19 years) is halved. This accounts for some preventive measures kept in place in schools, e.g., regular rapid testing or smaller class sizes. Scenario iii) is the default scenario unless explicitly stated. However, figures for scenarios i) and ii) are provided in the Supplementary Information. We scale all the contact structures by a linear factor, which increases or decreases the stringency of NPIs so that the settings are comparable. However, the scaling above does not account for heterogeneous NPIs acting only on contacts between specific age groups, such as workplace or school restrictions.

Our model includes the effect of vaccination, where vaccines are administered with an age-stratified two-dosage delivery scheme. The scheme does not discriminate on serological status, i.e., recovered individuals with natural antibodies may also access the vaccine when offered to them. Immunization, understood as the development of proper antibodies against SARS-CoV-2, does not occur immediately after receiving the vaccination dose. Thus, newly vaccinated individuals get temporarily put into extra compartments (\Vivi and \Vivii for the first and second dose respectively) where, if infected, they would progress through the disease stages as if they would not have received that dose. For modeling purposes, we assume that a sufficient immune response is build up $\ImDel$ days after being vaccinated ($\Vivi\to\Sivi$ and $\Vivii\to\Sivii$), and that a fraction $\fraccont$ of those individuals that received the dose acquire the infection before being immunized. Furthermore, there is some evidence that the vaccines partially prevent the infection with and transmission of the disease \cite{mallapaty_can_2021, hall_effectiveness_2021}. Our model incorporates the effectiveness against infection following an 'all-or-nothing' scheme, removing a fraction of those vaccinated individuals to the recovered compartments ($\Vivi\to\Rivi$ and $\Vivii\to\Rivii$), thus assuming that they would not participate in the spreading dynamics. However, we consider those vaccinated individuals with a breakthrough infection have a lower probability of going to ICU or to die than unvaccinated individuals, i.e., effectiveness against severe disease follows a 'leaky' scheme. Furthermore, we assume those individuals carry a lower viral load and thus are less infectious by a factor of two~\cite{harris2021impact}. All parameters and values are listed in~\tabref{tab:Parametros}.

We model the mean-field interactions between compartments by transition rates, determining the timescales involved. These transition rates can implicitly incorporate both the time course of the disease and the delays inherent to the case-reporting process. In the different scenarios analyzed, we include a non-zero influx $\Phii$, i.e., new cases that acquired the virus from outside. Even though this influx makes a complete eradication of SARS-CoV-2 impossible, different outcomes in the spreading dynamics might arise depending on both contact intensity and TTI \cite{contreras2021challenges}. Additionally, we include the effects of non-compliance and unwillingness to be vaccinated as well as the effects of the TTI capacities from health authorities, building on \cite{contreras2020low}. Throughout the manuscript, we do not make explicit differences between symptomatic and asymptomatic infections. However, we implicitly consider asymptomatic infections by accounting for their effect on modifying the reproduction number $\RtH$ and all other epidemiological parameters. To assess the lifting of restrictions in light of progressing vaccinations, we use a Proportional-Derivative (PD) control approach to adapt the internal reproduction number \RtH targeting controlled case numbers or ICU occupancy.

\begin{figure}[!ht]
    \centering
    \includegraphics[width=15.5cm]{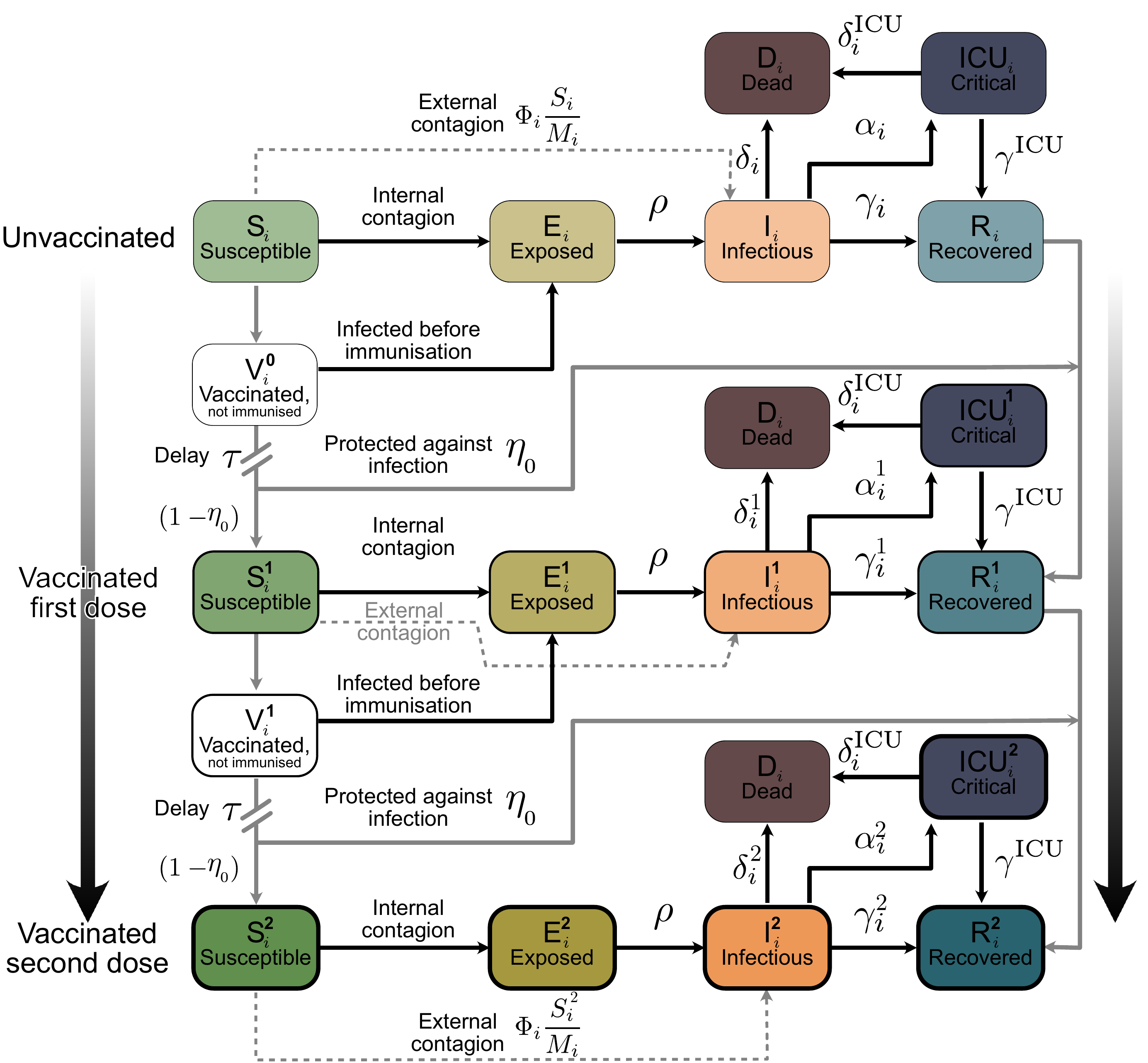}
    \caption{%
        \textbf{Scheme of our age-stratified SEIRD-ICU+vaccination model.} The solid blocks in the diagram represent different SEIRD compartments.  Solid black lines represent transition rates of the natural progression of the infection (contagion, latent period, and recovery). On the other hand, dashed lines account for external factors and vaccination. Solid gray lines represent non-linear transfers of individuals between compartments, e.\ g.\ through scheduled vaccination. From top to bottom, we describe the progression from unvaccinated to vaccinated, with stronger color and thicker edges indicating more protection from the virus. Subscripts $i$ indicate the age groups, while superscripts represent the number of vaccine doses that have successfully strengthened immune response in individuals receiving them. Contagion can occur internally, where an individual from age group $i$ can get infected from an infected person from any age group, or externally, e.\ g.\,, abroad on vacation. If the contagion happens externally, we assume that the latent period is already over when the infected returns and, hence, they are immediately put into the infectious compartments \Iiv.}
    \label{fig:WholeModelFlowchart}
\end{figure}

\subsection*{Model equations}

The contributions of the spreading dynamics and the age-stratified vaccination strategies are summarized in the equations below. They govern the infection dynamics between the different age groups, each of which is represented by their susceptible-exposed-infectious-recovered-dead-ICU (SEIRD+ICU) compartments for all three vaccination statuses. We assume a regime that best resembles the situation in Germany at the beginning of March 2021, and we estimate the initial conditions for the different compartments of each age group accordingly. Furthermore, we assume that neither post-infection immunity \cite{Daneab2021immunological} nor the immunization obtained through the different dosages of the vaccine vanish significantly in the considered time frames. The spreading parameters completely determine the resulting dynamics (characterized by the different age- and dose-dependent parameters, together with the hidden reproduction number $\RtH$) and the vaccination logistics.

All of the following parameters and compartments are shortly described in \tabref{tab:Parametros} and \tabref{tab:Variables}. Some of these are additionally elaborated in more detail in the following sections. Subscripts $i$ in the equations denote the different age groups, while superscripts denote the vaccination status: unvaccinated ($^0$ or none), immunized by one dose ($^1$), or by two doses ($^2$).

\begin{align}
    \frac{d\Si}{dt}         &=
        -\xunderbrace{\gammaeq\RtH\chii\Si\Ieff}_{\text{internal contagion}}
        -\xunderbrace{\fvi{t}\frac{\Si}{\Si + \Ri}}_{\substack{\text{administering} \\ \text{first dose}}}
        -\xunderbrace{\frac{\Si}{M_i}\Phii}_{\substack{\text{external} \\ \text{contagion}}} \label{eq:dSdt}
        \\
    \frac{d\Vivi}{dt}       &= 
        -\xunderbrace{\gammaeq\RtH\chii\Vivi\Ieff\hd}_{\text{internal contagion}} 
        +\xunderbrace{\fvi{t}\frac{\Si}{\Si + \Ri}\hd}_{\substack{\text{administering} \\ \text{first dose}}}
        -\xunderbrace{\Vividel\left(1-\fracconti\right)}_{\text{first dose showing effect}}
        - \xunderbrace{\frac{\Vivi}{M_i}\Phii\hd}_{\substack{\text{external} \\ \text{contagion}}} \\
    \frac{d\Sivi}{dt}       &= 
        - \xunderbrace{\gammaeq\RtH\chii\Sivi\Ieff\hd}_{\text{internal contagion}}
        - \xunderbrace{\fvii{t}\frac{\Sivi}{\Sivi+\Rivi}\hd}_{\substack{\text{administering}\\\text{second dose}}}
        + \xunderbrace{\left(1-\fracImmuni\right)\Vividel\left(1-\fracconti\right)}_{\text{first dose (not immune)}}
        - \xunderbrace{\frac{\Sivi}{M_i}\Phii\hd}_{\substack{\text{external} \\ \text{contagion}}} \\
    \frac{d\Vivii}{dt}      &= 
        - \xunderbrace{\gammaeq\RtH\chii\Vivii\Ieff\hd}_{\text{internal contagion}}
        + \xunderbrace{\fvii{t}\frac{\Sivi}{\Sivi+\Rivi}\hd}_{\substack{\text{administering}\\\text{second dose}}}
        - \xunderbrace{\Viviidel\left(1-\fraccontii\right)}_{\text{second dose showing effect}}
        - \xunderbrace{\frac{\Vivii}{M_i}\Phii\hd}_{\substack{\text{external} \\ \text{contagion}}} \\
    \frac{d\Sivii}{dt}      &= 
        - \xunderbrace{\gammaeq\RtH\chii\Sivii\Ieff\hd}_{\text{internal contagion}}
        + \xunderbrace{\left(1-\fracImmuni\right)\Viviidel\left(1-\fraccontii\right)}_{\text{second dose (not immune)}}
        - \xunderbrace{\frac{\Sivii}{M_i}\Phii\hd}_{\substack{\text{external} \\ \text{contagion}}} \\
    \frac{d\Ei}{dt}         &= 
        \xunderbrace{\gammaeq\RtH\chii\left(\Si+\Vivi\right)\Ieff}_{\text{internal contagion}}
        - \xunderbrace{\latRate\Ei}_{\substack{\text{end of}\\\text{latency}}}\\
    \frac{d\Eivi}{dt}       &=
        \xunderbrace{\gammaeq\RtH\chii\left(\Sivi+\Vivii\right)\Ieff}_{\text{internal contagion}}
        - \xunderbrace{\latRate\Eivi}_{\substack{\text{end of}\\\text{latency}}}\\
    \frac{d\Eivii}{dt}      &= 
        \xunderbrace{\gammaeq\RtH\chii\Sivii\Ieff}_{\text{internal contagion}} 
        -\xunderbrace{\latRate\Eivii}_{\substack{\text{end of}\\\text{latency}}}\\
    \frac{d\Ii}{dt}         &= \xunderbrace{\latRate \Ei}_{\substack{\text{end of}\\\text{latency}}} - \xunderbrace{\hdf \gammaeq \Ii}_{\text{recovery, ICU admission, or death}}
        + \xunderbrace{\frac{\Si+\Vivi}{M_i}\Phii}_{\substack{\text{external} \\ \text{contagion}}}\\
    \frac{d\Iivi}{dt}       &= \xunderbrace{\latRate \Eivi}_{\substack{\text{end of}\\\text{latency}}} - \xunderbrace{\hdf \gammaeq \Iivi}_{\text{recovery, ICU admission, or death}}
        + \xunderbrace{\frac{\Sivi+\Vivii}{M_i}\Phii}_{\substack{\text{external} \\ \text{contagion}}}\\
    \frac{d\Iivii}{dt}      &= \xunderbrace{\latRate \Eivii}_{\substack{\text{end of}\\\text{latency}}} - \xunderbrace{\hdf \gammaeq \Iivii}_{\text{recovery, ICU admission, or death}} + \xunderbrace{\frac{\Sivii}{M_i}\Phii}_{\substack{\text{external} \\ \text{contagion}}} \\
    \frac{d\ICUi^{\nu}}{dt} &= -\xunderbrace{\left(\drateiICU+\recrateiICU\right)\ICU^\nu}_{\text{recovery or death}} + \xunderbrace{\ICUratei^{\nu}\Ii^\nu}_{\substack{\text{ICU}\\\text{admission}}} \\
    \frac{d\Di}{dt}         &= \xunderbrace{\sum_{\nu}\left(\drateiICU \ICUi^{\nu}+\dratei^{\nu}\Ii^\nu\right)}_{\text{total deaths}}\\
    \frac{d\Ri}{dt}         &= \xunderbrace{\recrateiICU\ICUi+\recratei \Ii}_{\text{recovery}}
    - \xunderbrace{\fvi{t}\frac{\Ri}{\Si + \Ri}}_{\text{first dose}} \\
    \frac{d\Rivi}{dt}       &= 
        \xunderbrace{\recrateiICU\ICUivi+\recratei^1 \Iivi}_{\text{recovery}}
        + \xunderbrace{\fvi{t}\frac{\Ri}{\Si + \Ri}}_{\substack{\text{first dose}\\\text{after recovery}}} 
        - \xunderbrace{\fvii{t}\frac{\Rivi}{\Sivi+\Rivi}}_{\text{second dose}} 
        + \xunderbrace{\fracImmuni\Vividel\left(1-\fraccont\right)}_{\text{first dose (sterilizing immunity))}}\\
    \frac{d\Rivii}{dt}      &= 
        \xunderbrace{\recrateiICU\ICUivii+ \recratei^2 \Iivii}_{\text{recovery}}
        + \xunderbrace{\fvii{t}\frac{\Rivi}{\Sivi+\Rivi}}_{\substack{\text{second dose}\\\text{after recovery}}}
        + \xunderbrace{\fracImmuni\Viviidel\left(1-\fraccont\right)}_{\text{second dose (sterilizing immunity)}}.
\end{align}
\subsection*{Contact structure and the effect of NPIs on the contact levels}

We model the probability of a susceptible individual from age group $i$ to get infected from an individual from age group $j$ to be proportional to the --effective-- incidence in that group ($\sum_{\nu}I_j^{\nu}\virload^{\nu}$) and the contact intensity between the two groups, given by the entries $\left(\contMatrix\right)_{i j}$ of a contact matrix $\contMatrix$ scaled with the gross reproduction number \RtH. The contact matrices are normalized to force their largest eigenvalue (i.e., their spectral radius) to be 1, so that, when multiplied with \RtH, their spectral radius equals \RtH. The total contact levels for different levels of NPIs are then just linearly scaled with \RtH. We thus neglect any inhomogeneities in the NPIs that might affect contact between specific age groups more than others.

As described previously, we study three different configurations for the contact matrix \contMatrix: i) a perfectly homogeneously mixed population, ii) pre-COVID structure in the EU population \cite{mistry2021inferring}, and iii) "almost" pre-COVID contact structure \cite{mistry2021inferring}, but with reduced potentially-contagious contacts in the youngest age group (0--19 years) accounting for some preventive measures kept in place in schools. If not explicitly stated otherwise, the default contact matrix we use in the main text is always the intermediate "almost" pre-COVID contact structure matrix. For the three scenarios, we analyze the demographics and contact structures in Germany, Finland, the Czech Republic, and Italy as a sample for varying demographics across the EU.

\textbf{First scenario: homogeneous contact structure.} In this scenario, we consider that everyone has the same probability of meeting anyone from any other age group. The probability of meeting somebody from a given age group is thus proportional to the fraction of this age group within the whole population. Let $f$ be the column vector collecting these fractions, $f_i=\frac{M_i}{M}$, the contact matrix for the $n$ age-groups herein considered $\contMatrix\in\R^{n\times n}$ is thus given by

\begin{equation}
    \left(\contMatrix\right)_{i j} = f_j,\forall j
\end{equation}

and can be seen in \figref{fig:Contact_Structure}~(A,D,G,J) for the chosen demographics.
Note that by this construction the largest eigenvalue of this \contMatrix (i.e., its spectral radius) is automatically 1 for any demographics, i.e. for any $f$ that fulfills $\sum_j f_j=1$ (proof in Supplementary Information).

\textbf{Second scenario: pre-COVID contact intensity, real-world contact structure.} Here, we use the whole contact matrices from before the pandemic reported with one-year age resolution in \cite{mistry2021inferring}, converted into the age brackets that we chose. We normalize them by their spectral radius, leaving their internal contact structure intact. This scenario thus resembles completely homogeneous NPIs that affect every possible contact equally. The matrices are given in \figref{fig:Contact_Structure}~(B,E,H,K) for the chosen countries.

\textbf{Third scenario: "almost" pre-COVID contact intensity, real-world contact structure.} Finally, we again use the contact matrices from before the pandemic reported in \cite{mistry2021inferring} but adapt them to reduce the intensity of contacts of the youngest age group by half, accounting for those measures that remain in place to prevent contagion and mitigate outbreaks in school settings. Specifically, we halve the matrix element connecting the 0--19 age group with itself and normalize the obtained contact matrix \contMatrix by its spectral radius. As can be seen in the resulting matrices, given in \figref{fig:Contact_Structure}~(C,F,I,L), this affects that the main contributions in the contacts are more evenly spread in the 0--59 year age groups. This serves as a first approximation to the contact structure with inhomogeneous NPIs targeting different age groups differently both in a complete lockdown, as well as some continued measures in schools.

\begin{figure}[!ht]
\hspace*{-1cm}
    \centering
    \includegraphics[]{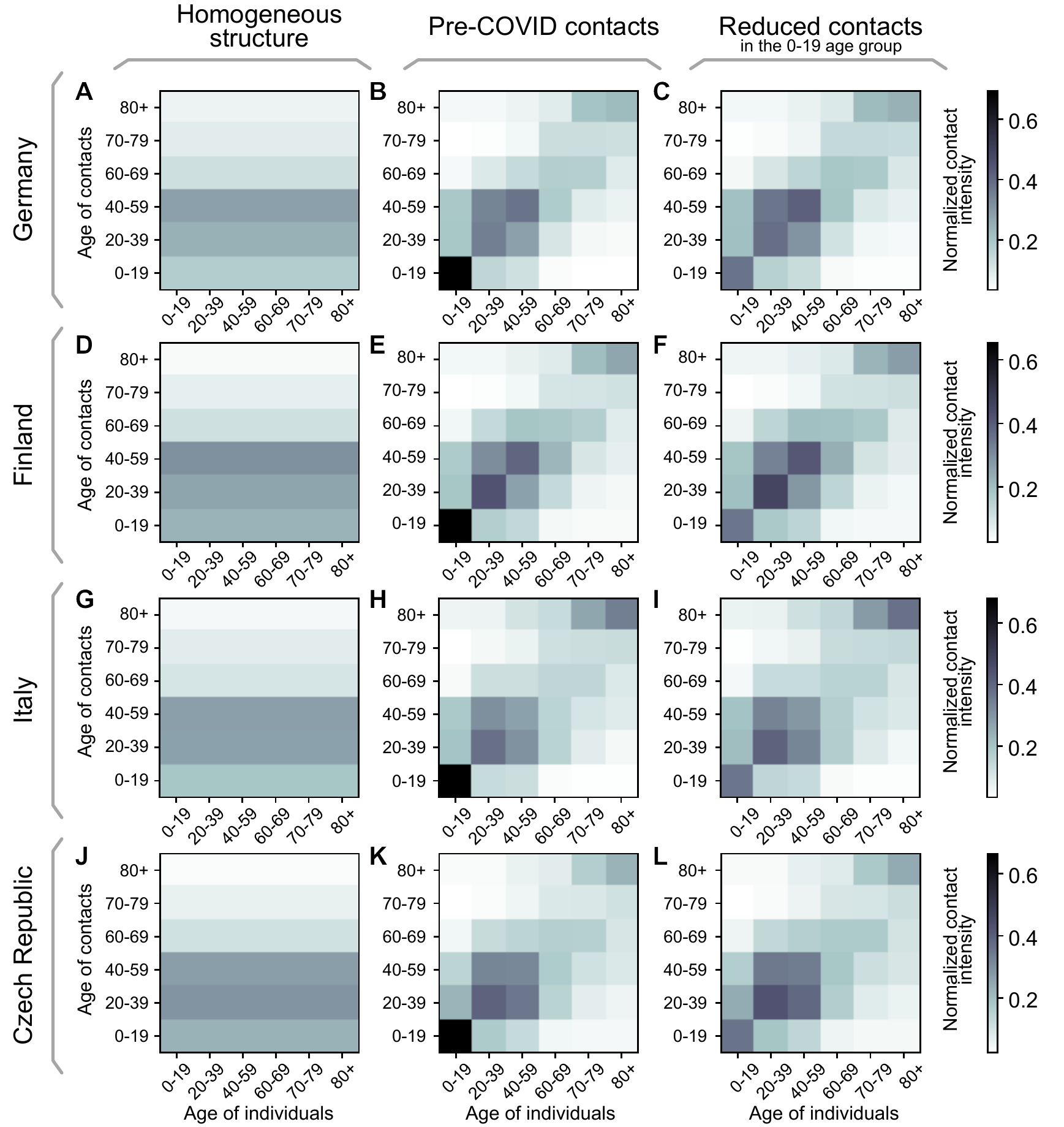}
    \caption{%
        \textbf{Contact structures for different EU countries in the three scenarios.} The chosen contact matrices for i) homogeneous contact structure, ii) pre-COVID contact structure, and iii) "almost" pre-COVID structure with reduced potentially-contagious contacts in schools for Germany (\textbf{A-C}), Finland (\textbf{D-F}), Italy (\textbf{G-I}) and the Czech Republic (\textbf{J-L}). Entries of the matrices show the contact intensity between age groups normalized to give each matrix a spectral radius of 1.
        }
    \label{fig:Contact_Structure}
\end{figure}

\subsection*{Vaccination dynamics and logistics} \label{sec:vaccination_logistics}

In real-world settings, not every person accepts the vaccine when offered. Additionally, vaccine uptake is bounded because some vulnerable groups cannot be vaccinated because of health-related reasons. A systematic survey\cite{covimo} estimates the vaccine uptake to be approximately 80\% across the adult population in Germany, which we chose as our baseline. Due to a higher perception of the risk caused by an infection, we expect that the uptake is higher for the elderly population. Thus, we set the uptake \uptakei to be age-group dependent. Besides the default 80\%, we choose two more sets of uptakes averaging to a total of 70\% and 90\%, respectively. We suppose that an increase in the uptake is possible by education and information measures. They are listed in table \ref{tab:vaccine_uptake}. We linearly interpolate between the three to model arbitrary total vaccine uptakes.

\begin{table}[htp]\caption{Parameters for the three main different vaccine uptake scenarios for Germany. The averages are to be understood across the vaccinable (16+) population. Slightly rescaled uptakes for Finnish, Italian and Czech age-demographics can be found in the Supplementary Information Tables~S1--S3.}
    \label{tab:vaccine_uptake}
    \centering
    \begin{tabular}{l p{1cm}p{1.5cm} llll p{1cm} p{1.5cm}}\toprule
         & age& eligible & minimal uptake & mid uptake & maximal uptake & population fraction \cite{agestructure}\\ 
        Group ID& group & fraction &\uptakei & \uptakei (default) &\uptakei &  $M_i/M$ \\\midrule
       1    &  0-19 & 0.2 (16+) &0.58 & 0.73 &          0.88 & \SI{0.18}{} \\
       2    & 20-39 & 1.0 &0.64 & 0.76 & 0.89 & \SI{0.25}{} \\ 
       3    & 40-59 & 1.0 &0.69 & 0.79 & 0.90 & \SI{0.28}{} \\ 
       4    & 60-69 & 1.0 &0.74 & 0.82 & 0.91 & \SI{0.13}{} \\ 
       5    & 70-79 & 1.0 &0.79 & 0.86 & 0.92 & \SI{0.09}{}  \\
       6    &   $>$80 & 1.0 &0.85 & 0.89 & 0.93 & \SI{0.07}{}  \\\midrule
        average & & &0.70 & 0.80 &   0.90 &  \\\bottomrule
    \end{tabular}
\end{table}

Using official data of the German vaccine stock and stock projections \cite{rki_impfquotenmonitoring, spiegelDeliveries} we build up an estimated delivery function $w_T$ that models the weekly number of doses delivered as a function of time. We assume it takes a logistic form, as we assume the number of daily doses increases strongly at the beginning until it reaches a stable level. Adapting the logistic function to the German stock projection (see ~\figref{fig:vaccinationLogistics}) yields:

\begin{equation}
    w_T(week) =  \frac{\SI{11e6}{doses}}{1+\exp\left(-0.17(week-21)\right)}, 
\end{equation}

where the parameters were chosen to roughly match past and projected deliveries, taking into account that some delays in the projections might appear because of logistic or manufacturing issues. Since the vaccine deliveries and distributions are done collectively and uniformly in the EU, we scale this German projection by the respective population sizes for the other countries studied herein (Finland, Italy, Czech Republic). We further assume that because of logistic delays, the vaccination of the delivered doses occurs with some delay, which we model as a convolution with an empirical delay kernel given by $K = [0.6,\,0.3,\,0.1]$ (fraction of vaccines administered in the same, second and third week following delivery). With that, we get the total vaccination rates per week.

These doses are distributed among the age groups, taking into account that each individual requires two doses, spaced by at least four weeks, aware of the potential benefits of further delaying the two doses \cite{maier2021potential}. 

The vaccine prioritization order is the following:

\begin{enumerate}
    \item First, to meet the demand of second doses, $\delDose$ weeks after the first dose. 
    \item Second, to distribute a fraction $\randomvacc$ of the remaining doses uniformly among age groups, to model the earlier vaccination of exposed occupations (health sector, first responders, among others).
    \item Last, to plan the rest of the doses for the oldest age group that has not been fully vaccinated yet.
\end{enumerate}

Exceptions to rule 3 are the low-risk groups 16--19, 20--39, and 40--59 that get vaccinated simultaneously. For each age group, only a fraction $\uptakei$ is vaccinated because of limited willingness to get vaccinated (\tabref{tab:vaccine_uptake}). In addition, the total number of vaccinations in the youngest age group 0--19 is further reduced since we consider only a fraction of around 20\% (fraction of 16--19 year-old individuals in the group) to be \textit{eligible} for vaccination (see \tabref{tab:vaccine_uptake}). The uptake $\uptakei$ in this age group is thus understood only among the eligible individuals.

This procedure results in the number of first $w^1_i(week)$ and second doses $w^2_i(week)$ vaccinated to the age group $i$ as function of the week. Dividing by 7 we obtain the daily administered first and second doses for age group $i$
\begin{align}
    \fvi{t} &= w^1_i\left(\lfloor t/7\rfloor \right)/7\label{eq:doses1} \quad \text{and} \\
    \fvii{t} &= w^2_i\left(\lfloor t/7\rfloor \right)/7\label{eq:doses2}.
\end{align}

\begin{figure}[!ht]
\hspace*{-1cm}
    \centering
    \includegraphics[width=18.5cm]{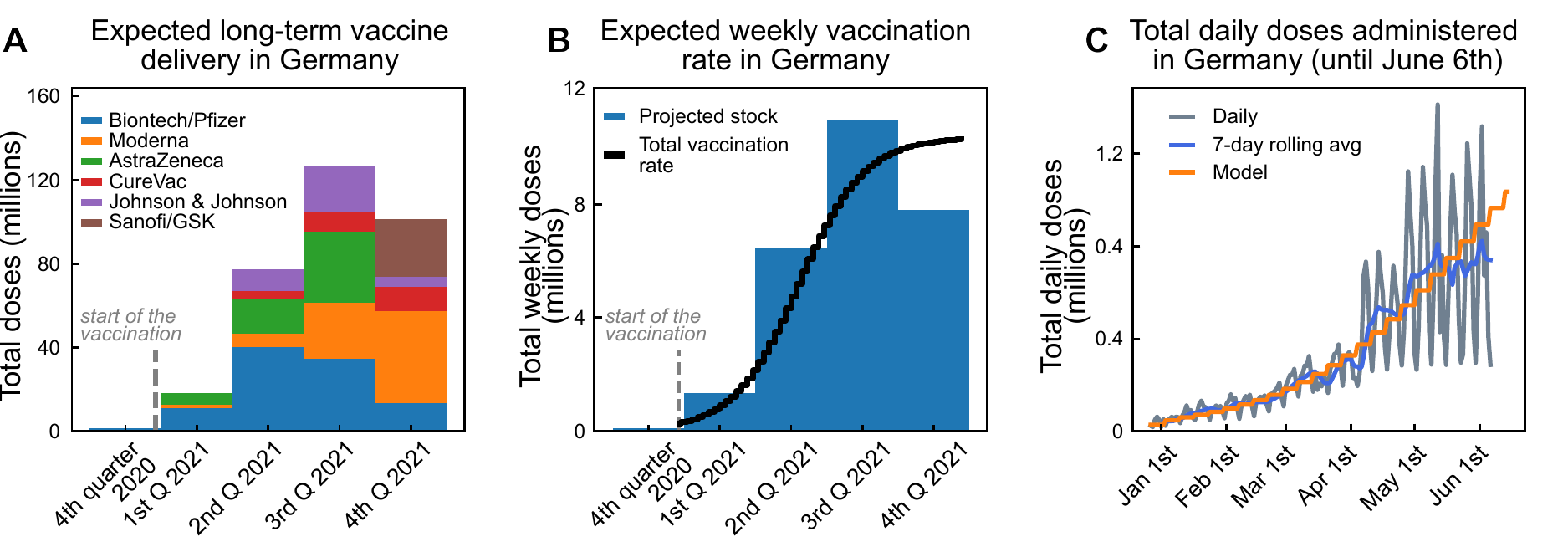}
    \caption{%
        \textbf{Estimated vaccination rates for Germany.} From the announced vaccination stock, we estimate the vaccination delivery function. \textbf{A:} Total aggregated doses of different vaccine producers in Germany. \textbf{B:} Equivalent amount of 2-dose vaccines available per week in Germany, parameterized using a logistic function. \textbf{C:} Comparison between expected and observed vaccination progress in Germany.}
    \label{fig:vaccinationLogistics}
\end{figure}

\subsection*{Age-stratified transition rates}

Here, we will introduce the transition rates used in the model equations; details about their estimation are presented in the later sections. 

The recovery rate $\recratei$ of a given age group describes the recovery without the need for critical care. It is estimated from the literature. We expect this parameter to vary across age groups, mainly because of the strong correlation between the severity of symptoms and age. Age-resolved recovery rates estimated from data of the non-vaccinated population in Germany are listed in~\tabref{tab:age_dep_params}. 

The ICU recovery rate $\recrateiICU$ is the rate of a given age group for leaving ICU care. This parameter varies across age groups, mainly because of the strong correlation between the severity of symptoms, age, and duration of ICU stay. Age-resolved ICU recovery rates estimated from data of the non-vaccinated population in Germany are listed in~\tabref{tab:age_dep_params}.

The ICU admission rate \ICUratei of a given age group describes the transition from the infected compartment to the ICU compartment. It accounts for those cases developing symptoms where intensive care is required and is estimated from the literature. We expect this parameter to vary across age groups, mainly because of the strong correlation between the severity of symptoms and age. Age-resolved ICU-transition rates estimated from data of the non-vaccinated population in Germany are listed in~\tabref{tab:age_dep_params}. Further, we assume that anyone requiring intensive care would have access to ICU beds and care.

The death rate $\dratei$ also varies across age groups, mainly because of the strong correlation between the severity of symptoms and age. This parameter accounts for those individuals dying because of COVID-19, but without being treated in the ICU. In that way, it is expected to be even smaller than the infection fatality ratio (IFR). Age-resolved death rates (outside ICU) estimated from data of the non-vaccinated population in Germany are listed in~\tabref{tab:age_dep_params}.

The death rate in ICU $\drateiICU$ also varies across age groups, mainly because of the strong correlation between the severity of symptoms and age. In addition, this parameter accounts for those individuals dying because of COVID-19 when being treated in the ICU. In that way, it is expected to be even larger than the case fatality ratio CFR. Age-resolved ICU death rates estimated from data of the non-vaccinated population in Germany are listed in~\tabref{tab:age_dep_params}.

\begin{table}\caption{Age-dependent parameters}
\label{tab:age_dep_params}
\begin{tabular}{ccccccc}\toprule
Group ID    & \makecell[c]{ICU \\ admission rate \\\ICUratei$\left(\SI{}{days^{-1}}\right)$} & \makecell[c]{Death rate \\ in I \\\dratei$\left(\SI{}{days^{-1}}\right)$}  & \makecell[c]{Natural \\ recovery rate \\\recratei$\left(\SI{}{days^{-1}}\right)$}   & \makecell[c]{Death rate \\ in ICU \\\drateiICU$\left(\SI{}{days^{-1}}\right)$} & \makecell[c]{ICU \\ recovery rate \\\recrateiICU$\left(\SI{}{days^{-1}}\right)$}& \makecell[c]{Avg. duration \\ in ICU \\\ICUresTime$\left(\SI{}{days}\right)$} \\\midrule
1  & \num{0,000014} & \num{0,000002} & \num{0,09998} & \num{0,005560}  & \num{0,194440}                   & \num{5}                \\
2 & \num{0,000204} & \num{0,000014} & \num{0,09978} & \num{0,007780}  & \num{0,192220}                   & \num{5}                \\
3  & \num{0,001217} & \num{0,000111} & \num{0,09867} & \num{0,006164}   & \num{0,084745}                   & \num{11}               \\
4  & \num{0,004031} & \num{0,000317} & \num{0,09565} & \num{0,009508}   & \num{0,081401}                   & \num{11}               \\
5 & \num{0,005435} & \num{0,001422} & \num{0,09314} & \num{0,019756}   & \num{0,091355}                   & \num{9}                \\
6 & \num{0,007163} & \num{0,004749} & \num{0,08809} & \num{0,082433}     & \num{0,084233}                   & \num{6}               \\\bottomrule
\end{tabular}
\end{table}

We estimate these age-dependent rates by combining hospitalization data with published IFR data. A comparison of ICU transition rates $\ICUratei^\nu$ across the EU is difficult as the definition of stationary treatment differs with regard to \emph{hospitalization}, \emph{ICU low} and \emph{high-care}. In order to obtain sensible estimates for these rates, we need to consider the size of the unobserved pool in each age group. Our analysis of ICU transition rates is based on 14043 hospitalization reports collected in Germany between early 2020 and Oct. 26, 2020, as part of the official reporting data \cite{RKI_Krankheitsschwere}. Those reports contain 20-year wide age strata but only represent a small sub-sample of all ICU-admissions ($n=723$). A complete count of ICU-admissions is maintained by the \emph{Deutsche Interdisziplinäre Vereinigung für Intensiv- und Notfallmedizin} \cite{DIVI2020Tagesreport}, without additional patient-data, like age. 19250 ICU admissions were reported throughout the same time frame. We estimated the number of ICU admissions in each 20-year wide age group by combining both sources, matching well with German studies on the first wave~\cite{karagiannidis2020case}.

Throughout the first and second wave, the per age-group case-fatality rates (CFRs) in Germany are more than two times larger than the age-specific infection fatality rates (IFRs) estimated by \cite{Levin2020,odriscoll_age-specific_2021}. This difference indicates unobserved infections. Seroprevalence studies from Q3 2020 \cite{Seroepidemiological_study_RKI} confirm the existence of unobserved pools. The total number of infections in each age group is inferred from observed deaths assuming the age-specific IFR from \cite{odriscoll_age-specific_2021}. $\ICUratei^{\nu}$ (\emph{low-} and \emph{high-care}) is calculated by dividing estimated ICU-admissions in each age group by the estimated total infections in each of those groups. A similar method is applied for the ICU-death-rate $\drateiICU$ by taking hospitalization-deaths from \cite{RKI_Krankheitsschwere} as a proxy for the age distribution.

The ICU-rates from the 10-year wide age-groups \cite{salje2020estimating} based on French data (\emph{high-care} only) were used to subdivide the 20-year wide Age-group 60-79, replicating the French rate-ratio between 60-69 and 70-79 for the German ICU-ratios, while maintaining the German age-agnostic ICU-rate. Noteworthy, there is great variability between the reported ICU rates among different countries, and it seems to be more a problem of reporting criteria rather than differences in virus and host response \cite{millar2020apples}. Furthermore, as treatments become more effective compared to the first wave, the residence times have decreased in the second wave \cite{karagiannidis2021major}, thus modifying the transition rates.

We also considered the influence of our decision to use the IFR of O'Driscoll et al.\cite{odriscoll_age-specific_2021} instead of Levin et al.\cite{Levin2020}. The IFR from Levin et al. is about 50\% larger and would lead to a lower level of infections overall in our scenarios, therefore reducing the fraction of natural immunity acquired at the end of the scenarios. 

\subsection*{Estimation of general transition rates}

After listing all transition rates that we consider in our work, we will now explain how we estimate them.
Since we have to start somewhere, let us look at the \ICUi compartment first (see~\figref{fig:WholeModelFlowchart} top right). The differential equation, without influx and including the initial condition ${\rm ICU}_0$, is given by

\begin{equation}
    \ICUi' = -\xunderbrace{\drateiICU \ICUi}_{\text{to }\Di}-\xunderbrace{\recrateiICU \ICUi}_{\text{to }\Ri},\qquad \ICUi(0) = {\rm ICU}_0.
\end{equation}

The solution of this ODE is known to be

\begin{equation}
    \ICUi = {\rm ICU}_0\exp\left(-(\drateiICU+\recrateiICU)t\right).
\end{equation}

If we know the average \ICUi residence time \ICUresTime, we can obtain an expression for $(\drateiICU+\recrateiICU)$:

\begin{equation}
    \drateiICU+\recrateiICU = \frac{1}{\ICUresTime}.
\end{equation}

Further, assuming that a fraction \fICUdeath of those individuals being admitted to ICUs would die, we obtain an expression linking all rates:

\begin{equation}
    \fICUdeath =\frac{\# \text{ people dead by } t=\infty}{\text{people entering \ICUi at } t = 0}= \frac{\drateiICU \cancel{{\rm ICU}_0} \dis\int_{0}^{\infty}\exp\left(-\frac{t}{\ICUresTime}\right)dt}{\cancel{{\rm ICU}_0}} = \drateiICU\ICUresTime.
\end{equation}

Therefore, the transition rates are given by:

\begin{equation}
    \drateiICU = \frac{ \fICUdeath}{\ICUresTime} \qquad \text{and} \qquad \recrateiICU = \frac{\left(1-\fICUdeath\right)}{\ICUresTime}.
\end{equation}

Using this modeling approach, we implicitly assume the time scales at which people leave the ICU through recovery or death to be the same, i.\ e.,\ the average ICU stay duration is independent of the outcome of the course of the disease. 

Similarly, we can estimate the infected-to-death rate ($\dratei$), the infected-to-ICU transition rate (ICU admission rate $\ICUratei$) and the infected-to-recovered rate ($\recratei$) based on these fractions and average times. If we assume that all the relevant median times are the same, we obtain the following expressions for the rates: 

\begin{equation}
    \dratei = \frac{\fIdeath}{\IresTime},\qquad \ICUratei = \frac{ \fItoICU}{\IresTime},\qquad  \recratei = \frac{\left(1-\left(\fIdeath+\fItoICU\right)\right)}{\IresTime}.
\end{equation}
As the average residence time in the $I$ compartment is dominated by recoveries we assume $\IresTime=\SI{10}{days}$ \cite{he2020temporal,pan2020time,Ling2020Persistence}.

\subsection*{Modeling vaccine efficacies}

We assume the main effect of vaccinations on the individual to be twofold. A fraction \fracImmun that has received both vaccine doses will develop total immunity and not contribute to the spreading dynamics. The rest may principally be infected with the virus but still have some protection against a severe course of the illness, resulting in a lower probability of dying or going to ICU. Both effects combined give the total protection against severe infections seen in vaccine studies, which we will denote with \avProtection. For current COVID-19 vaccines, efficacies against severe disease \avProtection ranging from 70--99\% \cite{mahase_covid-19_2021, dagan_bnt162b2_2021, bernal2021vaccineEffectiveness,polack2020safety,voysey2021safety, israel2021effectiveness,haas2021impact} and infection blocking potentials \fracImmun of 60--90\% \cite{levine2021decreased,hall_effectiveness_2021,pritchard2021impact,thompson2021interim} are reported. The roughly uniform distribution of vaccine types in the European Union (see also \figref{fig:vaccinationLogistics}), consists to a larger part of mRNA-type vaccines for which comparatively high values \avProtection of 97--99\% \cite{israel2021effectiveness, haas2021impact} and \fracImmun of 80--90\% are reported. We thus chose the rather conservative 90\% for \avProtection and 75\% for \fracImmun as our default values. The explicit \avProtection and \fracImmun do not explicitly appear in our equations, but as parameters \fracImmuni and \avProtectioni, which we derive from the reported numbers as follows.

Due to the lack of solid evidence on the effects of the first dose, we assume that the fraction of individuals developing total immunity already after the first dose is given by \fracImmuni. We further assume that of the $(1-\fracImmuni)$ people that do not develop the immunity after the first dose, the same fraction \fracImmuni acquires it after the second dose, i.\ e.\ the total vaccination path of the people that do not develop total immunity after both doses is given by $\Si\overset{1-\fracImmuni}{\longrightarrow}\Sivi\overset{1-\fracImmuni}{\longrightarrow}\Sivii$. \fracImmuni can thus be related to \fracImmun by the formula

\begin{align}
    \fracImmun  &= 1-\frac{\text{not fully protected}}{\text{total vaccinated}}\nonumber\\
                &= 1 - \left(1-\fracImmuni\right)^2 = \fracImmuni\left(2-\fracImmuni\right).\label{eq:fracImmuni}
\end{align}

For individuals vaccinated with both doses without total immunity, i.\ e.,\ from \Sivii, we reduce the probabilities to die or go to ICU after infection to account for the reduced risk of severe symptoms due to the vaccine. Of the total number of people who get vaccinated the risk of going to ICU or dying is thus reduced by a factor 
\begin{align}
    (1-\avProtection)  &= (1-\fracImmun)\cdot(1-\avProtectioni), \label{eq:avProtectioni}
\end{align}

from which we can deduce the value of \avProtectioni.

Again, due to lack of solid data on the first doses we assume the risk of severe COVID-19 is reduced to a factor $\sqrt{(1-\avProtection)}$ when only a single dose has been received. From these assumptions we arrive at

 \begin{align}
     \dratei^{\nu}   & = (\sqrt{1-\avProtectioni})^{\nu}\dratei, \label{eq:dratei_nu} \\
     \ICUratei^{\nu} & = (\sqrt{1-\avProtectioni})^{\nu}\ICUratei, \label{eq:ICUratei_nu}\\
     \recratei^{\nu}+\dratei^{\nu}+\ICUratei^{\nu} & = \gammaeq,\label{eq:recratei_nu}
 \end{align}
 
where $\nu=\{1,2\}$ represents the dose of the vaccine for which an individual has successfully developed antibodies. Note that $\nu$ is used as a super-index on the left-hand side of the equation but as an exponent on the right-hand side. \autoref{eq:recratei_nu} enforces vaccination not to alter the total average timescale of the disease course.

The transition rates from ICU to death, \drateiICU, and from ICU to recovered, \recrateiICU, are assumed to remain equal across doses. The reasons for this assumption are i) a lack of solid evidence for significant differences, and ii) once in ICU, it is reasonable to assume that the vaccine failed to work for this individual.

In addition to the effects of complete sterilizing immunity (\fracImmun) and protection against severe disease (\avProtection), we include a third effect of vaccines: Individuals that happen to have a breakthrough infection despite being vaccinated carry a lower viral load and are consequently less infectious than unvaccinated infected individuals. This has been shown already after the first dose \cite{harris2021impact,pritchard2021impact}. We include this effect by a factor \virload in the contagion term (c.f. \eqref{eq:dSdt}).

\subsection*{Individuals becoming infectious while developing antibodies}

One special case that one has to consider is when individuals acquire the virus in the time frame between being vaccinated and developing an adequate antibody level. We assume that individuals share behavioral characteristics with the members of the corresponding susceptible compartment, so contagion follows the same dynamics. Let $X_i(s)$ be the fraction of susceptible individuals of a given age group vaccinated at time $s_0<s$ and are not infected until time $s$. Assuming they can only leave the compartment by getting infected, the differential equation governing their dynamics is:

\begin{equation}\label{eq:dXds}
    \frac{d X_i}{ds} = -\RtH\chii X_i\Ieff- \frac{X_i}{M_i}\Phii,\qquad \text{with } X_i(s_0) = 1.
\end{equation}

The solution of~\eqref{eq:dXds} is given by $X_i(s) = \exp\left(-\dis\int_{s_0}^{s}\dis\Ieffs{s'}ds'\right)\exp\left(-\dis\frac{\Phii (s-s_0)}{M_i}\right)$. Following the same formalism for every batch of vaccinated individuals produced at time $t-\ImDel$, the ones that remain susceptible by time $t$ are given by:

\begin{equation}
    X_i(t)  = \exp\left(-\dis\int_{t-\ImDel}^{t}\dis\Ieffs{t'}dt'\right)\exp\left(-\frac{\Phii \ImDel}{M_i}\right).
\end{equation}

Therefore, we define the fraction of susceptible individuals acquiring the virus in the time-frame of antibodies development as

\begin{equation}\label{eq:fraccont}
\fraccont= 1-\exp\left(-\dis\int_{t-\ImDel}^{t}\dis\Ieffs{t'}dt'\right)\exp\left(-\frac{\Phii \ImDel}{M_i}\right).
\end{equation}

This fraction is then subtracted in the transitions $\Vi^\nu\to\Si^{\nu+1}$ from the vaccinated to the immunized pools in the differential equations.

\subsection*{Effect of test-trace-and-isolate}\label{sec:Rt_TTI}

At low case numbers and moderate contact reduction, the spreading dynamics can be mitigated through test-trace-and-isolate (TTI) policies  \cite{contreras2021challenges,contreras2020low}. In such a regime, individuals can have slightly more contacts because the overall low amount of cases enables a diligent system to trace offspring infections and stop the contagion chain. In other words, efficient TTI would allow for having a larger gross reproduction number \RtH without rendering the system unstable. The precise allowed increase in \RtH is determined by i) the rate at which symptomatic individuals are tested, ii) the probability of being randomly screened, and iii) the maximum capacity and fraction of contacts that health authorities can manually trace. When the different components of this meta-stable regime break down, we observe a self-accelerating growth in case numbers.

In our age-stratified model, we do not explicitly include TTI, given all the uncertainties that arise from the age-related modifying factors. However, we use our previous results to estimate the gross reproduction number \RtH that would produce the same observed reproduction number in the different regimes of i) no test or contact tracing, ii) strict testing criteria, iii) self-reporting, and iv) full TTI. Doing so, we build an empirical relation to evaluating the contextual stringency of the different strategies herein compared (namely, long-term stabilization at high or low case numbers). 

\begin{figure}[!ht]
    \centering
    \includegraphics[]{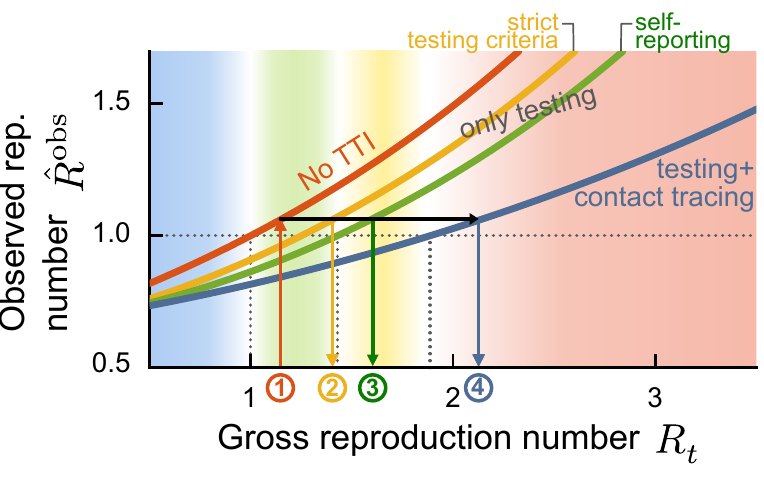}
    \caption{%
        \textbf{Test-trace-and-isolate (TTI) policies allow for greater freedom (quantified by the gross reproduction number \RtH) while observing the same reproduction number \Rtobs.} Systematic efforts to slow down the spread of the disease, such as mass testing (random screening) and contact tracing, allow decreasing the observed reproduction number of the disease. For observing the same outcome in \Rtobs, the gross reproduction number \RtH would increase, or, in other words, individuals would be allowed to increase their potentially contagious contacts. Therefore, we extrapolate the \RtH allowed in a full TTI setting at low case numbers and determine the equivalent \RtH trends required to reach the same \Rtobs in different regimes, starting from the raw value considering no TTI (red curve). Assuming that the relationship between \RtH and \Rtobs is exponential (eq~\eqref{eq:Rtobs_f_RtH}), we can obtain the expected \RtH trends in the low-case numbers TTI regime. Starting from the raw \RtH curve (red, 1), we can obtain \RtH in all the other possible regimes: under strict testing criteria (yellow, 2), self-reporting (green, 3), or full TTI (blue, 4). Adapted from \cite{contreras2021challenges}.
        }
    \label{fig:TTIRttransformation}
\end{figure}

In the phase diagram of~\figref{fig:TTIRttransformation} we illustrate the conversion methodology. Two different $\RtH$ might produce the same observed reproduction number $\Rtobs$, depending on the regime in which they operate. Fitting all curves to an exponential function, and assuming that the largest eigenvalue of the system (for all possibilities of testing and tracing) can be represented as a function of the gross reproduction number $\RtH$, we obtain

\begin{equation}\label{eq:Rtobs_f_RtH}
    \Rtobs = a \exp\left(b\RtH\right).
\end{equation}

We then want to evaluate how to translate the values we get from our control problem (which has no testing nor tracing) to the equivalent in other regimes. Assuming that all strategies have the same $\Rtobs$ (as schematized in~\figref{fig:TTIRttransformation}), we can relate their gross reproduction numbers in each regime through a simple equation:

\begin{equation}
    \RtH^{i} = \frac{1}{b_{i}} \left(\ln\left(\frac{a_{0}}{a_{i}}\right)+b_{0}\RtH\right),
\end{equation}

which corresponds to a line, and where the subscript 0 represents the base scenario (with no testing or contact tracing) and the subscript $i$ represents the other strategies. The exponential fit to the curves shown in~\figref{fig:TTIRttransformation} gives to the following line equations: 

\begin{align}
    \RtH^{\rm test(ineff)}  & = 1.0211\RtH + 0.2229,\\
    \RtH^{\rm test(eff)}    & = 1.0756\RtH + 0.3272,\\
    \RtH^{\rm TTI}          & = 1.6842\RtH + 0.1805.
\end{align}

Assuming smooth transitions for these conversions in \RtH, which are related to certain values the new daily cases $N$ ($N_{\rm TTI}<N_{\rm test(eff)}<N_{\rm test(ineff)}<N_{\text{no test}}$ respectively), we can define a general conversion $\RtH(N)$:

\begin{equation}\label{eq:Rt_conversion}
\RtH(N) = 
\begin{cases}
     \dis \RtH^{\rm TTI}, &\qquad \text{if }  N< N_{\rm TTI} \\
     \RtH^{\rm test(eff)} \phi_1 + \RtH^{\rm TTI}\left(1-\phi_1\right) ,&\qquad \text{if } N_{\rm TTI}\leq N < N_{\rm test(eff)}\\
     \RtH^{\rm test(ineff)} \phi_2 + \RtH^{\rm test(eff)}\left(1-\phi_2\right),&\qquad \text{if } N_{\rm test(eff)}\leq N < N_{\rm test(ineff)}\\
     \RtH \phi_3 + \RtH^{\rm test(ineff)}\left(1-\phi_3\right),&\qquad \text{if } N_{\rm test(ineff)}\leq N < N_{\text{no test}}\\
     \RtH,&\qquad \text{else},\\
\end{cases}
\end{equation}

where the $\phi$ parameters of each convex combination depend on $N$:

\begin{equation}
    \phi_1 = \frac{N-N_{\rm TTI}}{N_{\rm test(eff)}-N_{\rm TTI}}, \qquad \phi_2 = \frac{N-N_{\rm test(eff)}}{N_{\rm test(ineff)}-N_{\rm test(eff)}},\qquad \text{and} \quad \phi_3 = \frac{N-N_{\rm test(ineff)}}{N_{\text{no test}}-N_{\rm test(ineff)}}.
\end{equation}

Default reference values for the $N-$related set-points are $N_{\rm TTI} = 20$, $N_{\rm test(eff)} = 100$, and $N_{\rm test(ineff)} = 500$ and $N_{\text{no test}}=\SI{10000}{}$ new daily cases per million. When we plot and refer to the gross reproduction number \RtH, it is always the value obtained from eq.\ \eqref{eq:Rt_conversion}.

\subsection*{Observed reproduction number}

In real-world settings, the full extent of the disease spread can only be observed through testing and contact tracing. While the \textit{true} number of daily infections $N$ is a sum of all new infections in the hidden and traced pools, the \textit{observed} number of daily infections $\Nhatobs$ is the number of new infections discovered by testing, tracing, and surveillance of the quarantined individuals' contacts. Thus, the observed number of daily infections is given by

\begin{equation}
    \Nhatobs (t) = \Big[\xunderbrace{\sum_{i,\nu}\latRate E_i^{\nu}(t)}_{\substack{\text{end of}\\\text{latency}}} + \xunderbrace{\sum_{i,\nu}\frac{\Si^{\nu}(t)+\Vi^{\nu}(t)}{M_i}\Phii(t)}_{\text{ext. influx}} \Big]\circledast \xunderbrace{\mathcal{K}(t)}_{\substack{\text{delay}\\\text{kernel}}}
    \label{eq:Nreport}
\end{equation}

where $\circledast$ denotes a convolution and $\mathcal{K}$ an empirical probability mass function that models a variable reporting delay, inferred from German data. As the Robert-Koch-Institute (RKI), the official body responsible for epidemiological control in Germany~\cite{anderHeiden2020Schatzung}, reports the date the test is performed, the delay until the appearance in the database can be inferred.
The laboratories obtain \SI{50}{\%} of the sample results on the next day, \SI{30}{\%} the second day, \SI{10}{\%} the third day, and further delays complete the remaining \SI{10}{\%}, which for simplicity we will truncate at day four. Considering that an extra day is needed for reporting the laboratory results, the probability mass function for days 0 to 5 is be given by $\mathcal{K}=[0,\,0,\,0.5,\,0.3,\,0.1,\,0.1]$.

The spreading dynamics are usually characterized by the observed reproduction number $\Rtobs$, an estimator of the effective reproduction number, calculated from the observed number of new cases $\Nhatobs(t)$. We use the definition underlying the estimates that are published by the RKI, which defines the reproduction number as the relative change of daily new cases separated by 4 days (the assumed serial interval of COVID-19~\cite{lauer2020incubation})

\begin{equation}\label{Rt_obs}
    \Rtobs = \frac{\Nhatobs(t)}{\Nhatobs(t-4)}.
\end{equation}

In contrast to the original definition of $\Rtobs$~\cite{anderHeiden2020Schatzung}, we do not need to remove real-world noise effects by smoothing this ratio.
It should be noted that calling \Nhatobs the observed case numbers is somewhat misleading since we do not model the hidden figure explicitly. However, as this is expected only to change slowly, it is still sufficiently accurate to obtain the observed reproduction number from eq.\ \eqref{Rt_obs}.

\subsection*{Keeping a steady number of daily infections with a PD control approach}

With increasing immunity from the progressing vaccination program, keeping the spread of COVID-19 under control will require less and less effort by society. We can use this positive effect to lower the infections by upholding the same NPIs or gradually lifting restrictions to keep daily case numbers or ICU occupancy constant.

We model the optimal lifting of restrictions in the latter strategy using a Proportional Derivative (PD) control approach. The hidden reproduction number $\RtH$ is changed at every day of the simulation depending on either the daily case numbers $\Nhatobs$ or the total ICU occupancy $\sum_{i,\nu}\ICUiv$ such that the system is always driven towards a given set point. The change in $\RtH$ is negatively proportional to both the difference between the state and the setpoint as well as the change of that difference in time. The former dependence increases the number of infections if the case numbers drift down while the latter punishes rapid increases of the case numbers, keeping the system from overshooting the target value. We omit a dependence on the cumulative error, as is usually done in a PID controller, as that would enforce oscillations around the setpoint and because the PD has proven to be sufficient for our purposes.

Since both the case numbers and the ICU occupancy inherently only react to changes in $\RtH$ after a few days of delay, we can further improve the stability of the control by \enquote{looking into the future}. The full procedure for every day $t$ of the simulation then follows:

\begin{enumerate}
    \item Run the system for a time span $T$ using the current $\RtH$. 
    \item Quantify the relative error $\Delta(t+T)$ of the system state at the end by the difference between the observed case numbers or the total ICU occupancy and the chosen set point divided by said set point.
    \item Calculate $\RtH$ for the next day according to
       \begin{equation*}
        \RtH\!\,_{+\SI{1}{day}} = \RtH - \left(k_p\cdot\Delta(t+T) + k_d\cdot \frac{d\Delta}{dt}(t+T)\right) ,
        \end{equation*}
        where $k_p$ and $k_d$ denote constant control parameters listed in table \ref{tab:control_parameters}.
    \item Revert the system from the state at $t+T$ to $t+\SI{1}{day}$ and start again at 1. 
\end{enumerate}

We use the same control system to uphold the setpoint as we use to drive the system towards that state from the initial conditions. In a staged-control-like manner, we make the system more reactive to high slopes near the setpoint, i.\ e.\ increase $k_d$ when within \SI{10}{\%} of the target. In this way, the system can drive up quickly to the target while preventing overreactions to the gradual immunization changes while hovering at the fixed value.

Scenarios 2-4 in the main text consist of a chain of these control problems, changing from controlled case numbers to controlled ICU occupancy at one of the vaccination milestones (\figref{fig:lifting_restrictions}).
\begin{table}[htp]\caption{The PD control parameters depending on the objective}
    \label{tab:control_parameters}
    \centering
    \begin{tabular}{l p{3cm} lll p{2cm}}\toprule
        & preview time span & proportional & derivative\\ 
        control problem & $T$ & $k_p$ & $k_d$  \\\midrule
        $\Nhatobs$ (close to set point) & 14 \SI{}{days} & 0.06 & 3.0 \\
        $\Nhatobs$ (away from set point) & 14 \SI{}{days}& 0.06 & 1.2 \\
        $\sum_{i,\nu}\ICUiv$ (close to set point) & 14 \SI{}{days} & 0.2 & 15.0 \\
        $\sum_{i,\nu}\ICUiv$ (away from set point) & 14 \SI{}{days} & 0.2 & 7.0 \\ \bottomrule
    \end{tabular}
\end{table}

\subsection*{Parameter choices}\label{sec:parameters}

For the age stratification of the population and the ICU rates, we used numbers published for Germany (\tabref{tab:vaccine_uptake}). We suppose that the quantitative differences to other countries are not so large that the result would differ qualitatively. When comparing ICU rates across countries, one has to bear in mind that the definition of what constitutes an intensive care unit can differ between countries. We chose our ICU limit of 65 per million as a conservative limit so that in Germany, around three-quarters of the capacity would still be available for non-COVID patients. This limit was reached during the second wave in Germany. Other countries in the EU might have fewer remaining beds for non-COVID patients at this limit, as Germany has a comparatively high \textit{per capita} number of ICU beds available.  
 
ICU-related parameters are calculated from 14043 hospitalizations reported by German institutions until October 26, 2020 \tabref{tab:age_dep_params}, converted to transitions rate from \tabref{tab:age_dep_ICU_rates}. All other epidemiological parameters, their sources, values, ranges, and units are listed in detail in~\tabref{tab:Parametros}.

The vaccine efficacy, as discussed previously, is modeled as a multiplicative factor of the non-vaccinated reference parameter. The dose-dependent multiplicative factor is chosen to be $\SI{90}{\%}$ in the default scenario, which is in the range of the 70 to $\SI{95}{\%}$ efficacy measured in phase 3 studies \cite{mahase_covid-19_2021} of approved vaccines and in accordance to the 92\% efficacy of the Pfizer vaccine found in a population study in Israel \cite{dagan_bnt162b2_2021}. In addition, we analyzed different scenarios of vaccine uptake (namely, the overall compliance of people to get vaccinated according to the vaccination plan) because of its relevance to policymakers and different scenarios of the protection the vaccine grants against infections \fracImmun. The latter has great relevance for assessing risks when evaluating restriction lifting. 

\subsection*{Initial conditions}\label{sec:initial_conditions}

The initial conditions are chosen corresponding to the situation in Germany at the beginning of March 2021. We assume a seroprevalence of $10\%$ because of post-infection immunity across all age groups, i.e., $R_i(0)=0.1 \cdot M_i \, \forall i$. The vaccination at the beginning is according to the vaccination schedule introduced before, which leaves 5.1 million doses administered initially and an initial vaccination rate of 168 thousand doses per day. This compares to the 6.2 million total and the around 150 thousand daily administered doses at the time \cite{covimo}. The initial number of daily new infections is at 200 per million, and the number of individuals treated in ICU is at 30 per million with an age distribution as observed during the first wave in Germany (taken from \cite{RKI_Krankheitsschwere}). From these conditions and the total population sizes of the age groups (\tabref{tab:vaccine_uptake}) we infer the initial size of each compartment.

\subsection*{Numerical calculation of solutions}
The system of delay differential equations governing our model were numerically solved using a Runge-Kutta 4th order algorithm, implemented in Rust (version 1.48.0). The source code is available on GitHub \url{https://github.com/Priesemann-Group/covid19_vaccination}.

\begin{table}[htp]\caption{Model parameters. The range column either describes the range of values used in the various scenarios, or if values depend on the age group (indexed by $i$), the lowest and highest value across age-groups.}
\label{tab:Parametros}
\centering
\begin{tabular}{l p{4cm} lll p{3cm}}\toprule
Parameter           & Meaning                       & \makecell[l]{Value \\ (default)}   & \makecell[l]{Range\\ }         & Units             &   Source  \\\midrule
$\RtH$              & Reproduction number (gross)  & 1.00                      & 0--3.5  & \SI{}{-}        & Assumed \\\midrule
$\fracImmun$        & Vaccine protection against transmission    & 0.75                  & 0.5--0.85  & \SI{}{-}        & \cite{levine2021decreased,mallapaty_can_2021,hall_effectiveness_2021}\\
\avProtection       & Vaccine efficacy (against severe disease)    & 0.9                 & 0.7--0.95  & \SI{}{-}        & \cite{mahase_covid-19_2021, dagan_bnt162b2_2021}\\
$\virload^{\nu}$    & Relative virulence of unvaccinated and vaccinated individuals                       & [1.0, 0.5, 0.5]    & 0.5 -- 1 & \SI{}{-}  &  \cite{harris2021impact}\\
$\ImDel$            & Immunization delay        & 7    &       & \SI{}{days}             &  \cite{polack2020safety,levine2021decreased}\\
$\randomvacc$           & Random vaccination fraction   & 0.35    &       &        &  \cite{bundesregierung_impfverordnung, stiko_empfehlung}   \\\midrule
$M_i$               & Population group size       & \tabref{tab:vaccine_uptake}    &          & people          & \cite{agestructure}\\
$\uptakei$          & Vaccine uptake   & \tabref{tab:vaccine_uptake}   &               & --         & \cite{wouters2021challenges}\\
$\latRate$          & Transition rate $E\to I$    & 0.25     &           & \SI{}{day^{-1}}       &   \cite{bar2020science,li2020substantial}\\
$\recratei^{\nu}$       & Recovery rate from \Iiv  & \tabref{tab:age_dep_params}      & 0.088 -- 0.1& \SI{}{day^{-1}}  & \cite{he2020temporal,pan2020time,Ling2020Persistence} \\
$\recrateiICU$      & Recovery rate from \ICUiv  & \tabref{tab:age_dep_params}      & 0.08 -- 0.2& \SI{}{day^{-1}}  &  \cite{Levin2020,Linden2020DAE,salje2020estimating} \\
$\dratei^{\nu}$     & Death rate from \Iiv         & \tabref{tab:age_dep_params}    &$10^{-6}$ -- 0.005 & \SI{}{day^{-1}}  &  \cite{Levin2020,Linden2020DAE,salje2020estimating} \\
$\drateiICU$        & Death rate from \ICUiv       & \tabref{tab:age_dep_params}     &  0.0055 -- 0.083 & \SI{}{day^{-1}}  &   \cite{Levin2020,Linden2020DAE,salje2020estimating}\\
$\ICUratei^{\nu}$   & Transition rate $I\to\ICU$                & \tabref{tab:age_dep_params}     & $10^{-5}$ -- 0.007 & \SI{}{day^{-1}}  &  \cite{Levin2020,Linden2020DAE,salje2020estimating} \\
$\Phii$             & Infections from external sources                         & 1           & & \makecell{\SI{}{cases\, day^{-1}}\\ per million}        &  Assumed \\\midrule
$\fraccont$         & Fraction of individuals getting infected before acquiring antibodies & --    &   --       & \SI{}{-}        &  Eq.~\eqref{eq:fraccont}   \\
$\gammaeq$          & Effective removal rate from infectious compartment & --    &  --        & \SI{}{day^{-1}}      &  $\left(\recratei^{\nu}+\ICUratei^\nu+\dratei^{\nu}\right)$\\
$\fvi{t}, \fvii{t}$ & Administered $1^\text{st}$ and $2^\text{nd}$ vaccine doses  & -    &   --       & \SI{}{doses/day}        &  Eq.~\eqref{eq:doses1}, \eqref{eq:doses2}  
\\\bottomrule
\end{tabular}%
\end{table}

\begin{table}[htp]\caption{Model variables, Subscripts $i$ denote the $i$th age group, superscripts the vaccination status (Unvaccinated, immunized by one dose, by two doses).}
\label{tab:Variables}
\centering
\begin{tabular}{l p{4cm} l  p{7cm} }\toprule
Variable & Meaning & Units & Explanation\\\midrule
$\Si,\, \Sivi,\, \Sivii$ & Susceptible pools    & \SI{}{people} & Non-infected people that may acquire the virus.  \\
$\Vivi,\, \Vivii$        & Vaccinated pools    & \SI{}{people} & Non-infected people that have been vaccinated but have not developed antibodies yet, thus may acquire the virus.  \\
$\Ei,\, \Eivi,\, \Eivii$ & Exposed pools        & \SI{}{people} & Infected people in latent period. Can not spread the virus. \\
$\Ii,\, \Iivi,\, \Iivii$ & Infected pools       & \SI{}{people} & Currently infectious people. \\
$\ICUi,\, \ICUivi,\, \ICUivii$ & ICU pools      & \SI{}{people} & Infected people receiving ICU treatment, isolated. \\
$\Di,\, \Divi,\, \Divii$ & Dead pools           & \SI{}{people} & Dead people. \\
$\Ri,\, \Rivi,\, \Rivii$ & Recovered pools      & \SI{}{people} & Recovered/immune people that have acquired post-infection or sterilizing vaccination immunity. \\
$\Nhatobs$ & Observed new infections &  \SI{}{people\, day^{-1}} & Daily new infections, including reporting delays. eq.~\eqref{eq:Nreport} \\
$\Rtobs$ & Observed reproduction number&  \SI{}{-} & The reproduction number that can be estimated only from the observed cases: $\Rtobs = \Nhatobs(t)/\Nhatobs(t-4)$. \\
\bottomrule
\end{tabular}%
\end{table}

\section*{Author Contributions}
S.B, S.C, J.D, and V.P. designed the research. S.B., S.C., and M.L. conducted the research. All authors analyzed the data. S.B. and S.C. created the figures. All authors wrote the paper.

\section*{Data availability}
The source code for data generation and analysis is available online on GitHub \url{https://github.com/Priesemann-Group/covid19_vaccination}.

\section*{Acknowledgments} 
We thank the Priesemann group for exciting discussions and for their valuable input. We thank Christian Karagiannidis for fruitful discussions about the age-dependent hospitalization, ICU and fatality rates. All authors with affiliation (1) received support from the Max-Planck-Society. \'AO-N received funding from PIA-FB0001, ANID, Chile. ML, JD, SM received funding from the "Netzwerk Universitätsmedizin" (NUM) project egePan (01KX2021).


\newpage
\renewcommand{\thefigure}{S\arabic{figure}}
\renewcommand{\figurename}{Supplementary~Figure}
\setcounter{figure}{0}
\renewcommand{\thetable}{S\arabic{table}}

\setcounter{table}{0}
\renewcommand{\theequation}{\arabic{equation}}
\setcounter{equation}{0}
\renewcommand{\thesection}{S\arabic{section}}
\setcounter{section}{0}
\section*{Supplementary information}

\section{Further parameters}
\begin{table}[htp]\caption{Parameters for the three main different vaccine uptake scenarios for Finland. Uptakes and averages are to be understood across the eligible (16+) population. For German data see \tabref{tab:vaccine_uptake} in the main text. Italian and Czech data are to be found in Supplementary Information \tabref{tab:vaccine_uptake_ita} and \tabref{tab:vaccine_uptake_cze} respectively.}
    \label{tab:vaccine_uptake_fin}
    \centering
    \begin{tabular}{l p{1cm}p{1.5cm} llll p{1cm} p{1.5cm}}\toprule
         & age& eligible & minimal uptake & mid uptake & maximal uptake & population fraction \cite{mistry2021inferring}\\ 
        Group ID& group & fraction &\uptakei & \uptakei (default) &\uptakei &  $M_i/M$ \\\midrule
       1    &  0-19 & 0.2 (16+) &0.60 & 0.74 & 0.88 & \SI{0.22}{} \\
       2    & 20-39 & 1.0 &0.65 & 0.77 & 0.89 & \SI{0.25}{} \\ 
       3    & 40-59 & 1.0 &0.70 & 0.80 & 0.90 & \SI{0.29}{} \\ 
       4    & 60-69 & 1.0 &0.75 & 0.83 & 0.91 & \SI{0.11}{} \\ 
       5    & 70-79 & 1.0 &0.80 & 0.86 & 0.92 & \SI{0.07}{}  \\
       6    &   >80 & 1.0 &0.85 & 0.89 & 0.93 & \SI{0.04}{}  \\\midrule
        average & & &0.70 & 0.80 &   0.90 &  \\\bottomrule
    \end{tabular}
\end{table}
\begin{table}[htp]\caption{Parameters for the three main different vaccine uptake scenarios for Italy. The averages are to be understood across the eligible (16+) population. For German data see \tabref{tab:vaccine_uptake} in the main text. Finnish and Czech data are to be found in Supplementary Information \tabref{tab:vaccine_uptake_fin} and \tabref{tab:vaccine_uptake_cze} respectively.}
    \label{tab:vaccine_uptake_ita}
    \centering
    \begin{tabular}{l p{1cm}p{1.5cm} llll p{1cm} p{1.5cm}}\toprule
         & age& eligible & minimal uptake & mid uptake & maximal uptake & population fraction \cite{mistry2021inferring}\\ 
        Group ID& group & fraction &\uptakei & \uptakei (default) &\uptakei &  $M_i/M$ \\\midrule
       1    &  0-19 & 0.2 (16+) &0.59 & 0.74 &          0.88 & \SI{0.20}{} \\
       2    & 20-39 & 1.0 &0.64 & 0.77 & 0.89 & \SI{0.28}{} \\ 
       3    & 40-59 & 1.0 &0.69 & 0.80 & 0.90 & \SI{0.28}{} \\ 
       4    & 60-69 & 1.0 &0.75 & 0.83 & 0.91 & \SI{0.11}{} \\ 
       5    & 70-79 & 1.0 &0.80 & 0.86 & 0.92 & \SI{0.09}{}  \\
       6    &   >80 & 1.0 &0.85 & 0.89 & 0.93 & \SI{0.05}{}  \\\midrule
        average & & &0.70 & 0.80 &   0.90 &  \\\bottomrule
    \end{tabular}
\end{table}
\begin{table}[htp]\caption{Parameters for the three main different vaccine uptake scenarios for Finland. The averages are to be understood across the eligible (16+) population. For German data see \tabref{tab:vaccine_uptake} in the main text. Finnish and Italian data are to be found in Supplementary Information \tabref{tab:vaccine_uptake_cze} and \tabref{tab:vaccine_uptake_ita} respectively.}
    \label{tab:vaccine_uptake_cze}
    \centering
    \begin{tabular}{l p{1cm}p{1.5cm} llll p{1cm} p{1.5cm}}\toprule
         & age& eligible & minimal uptake & mid uptake & maximal uptake & population fraction \cite{mistry2021inferring}\\ 
        Group ID& group & fraction &\uptakei & \uptakei (default) &\uptakei &  $M_i/M$ \\\midrule
       1    &  0-19 & 0.2 (16+) &0.60 & 0.74 &          0.88 & \SI{0.23}{} \\
       2    & 20-39 & 1.0 &0.65 & 0.77 & 0.89 & \SI{0.29}{} \\ 
       3    & 40-59 & 1.0 &0.70 & 0.80 & 0.90 & \SI{0.27}{} \\ 
       4    & 60-69 & 1.0 &0.75 & 0.83 & 0.91 & \SI{0.11}{} \\ 
       5    & 70-79 & 1.0 &0.80 & 0.86 & 0.92 & \SI{0.07}{}  \\
       6    &   >80 & 1.0 &0.85 & 0.89 & 0.93 & \SI{0.03}{}  \\\midrule
        average & & &0.70 & 0.80 &   0.90 &  \\\bottomrule
    \end{tabular}
\end{table}

\clearpage
\section{Sensitivity analysis}
\begin{figure}[!h]
    \centering
    \includegraphics[width=15.5cm]{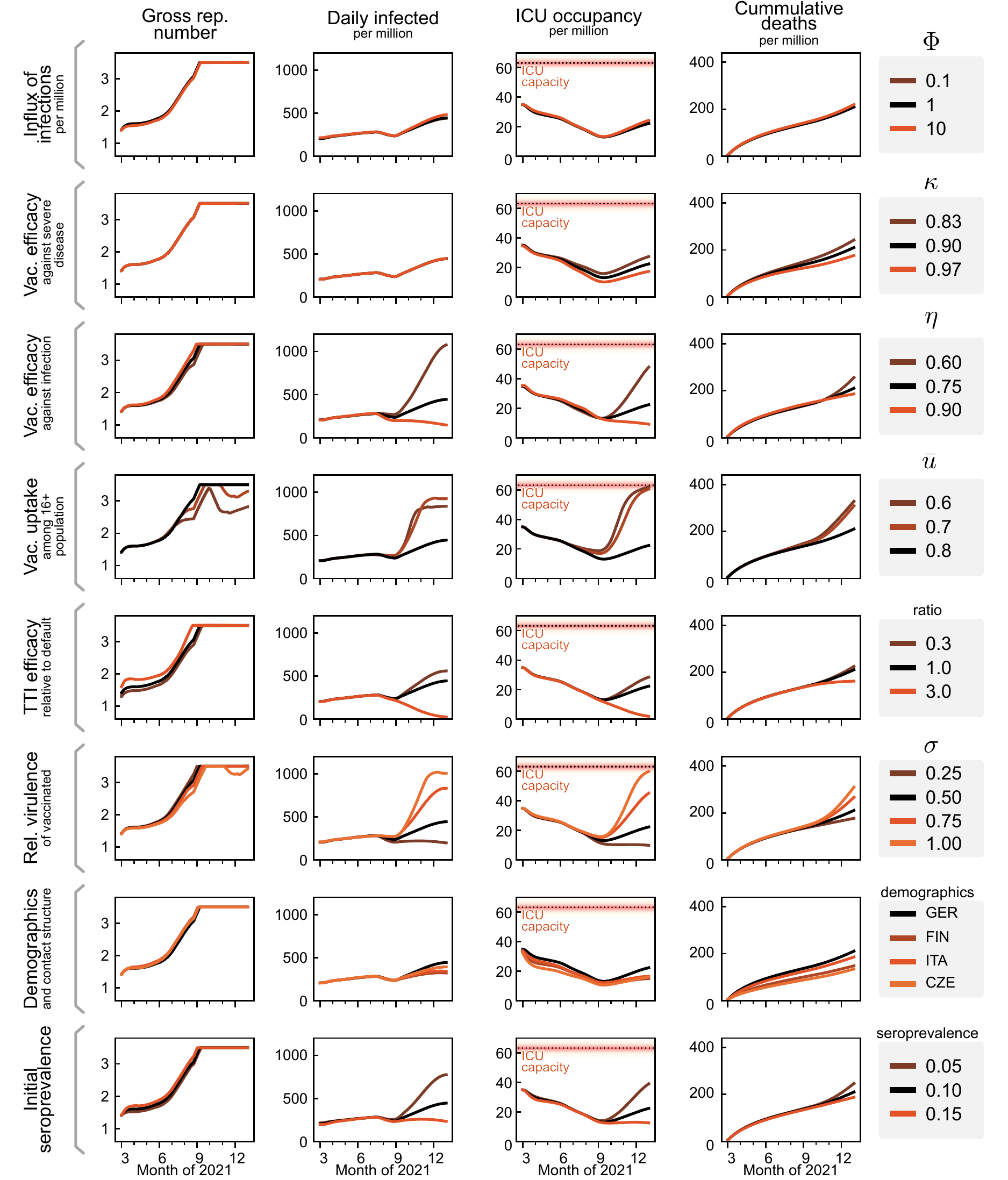}
    \caption{%
        \textbf{Sensitivity analysis centered at default parameters (solid black lines), for the fourth scenario from the main text.} We vary central parameters of the model individually, while keeping all others at their respective default value. For assessing the sensitivity to the TTI efficacy we scale all the capacity limits $N_{\rm TTI}$, $N_{\rm test(eff)}$, $N_{\rm test(ineff)}$ and $N_{\text{no test}}$ (see Methods) by a common ratio.
        }
    \label{fig:sensitivity}
\end{figure}

\clearpage
\section{Eigenvalues of the homogeneous contact matrix}\label{sec:EigenvaluesTheorem}

Here we will demonstrate a general case for the eigenvalues of a homogeneous contact matrix, for which every column accounts for the fraction age-groups represent respect to the total population. 

\begin{theorem}\label{theorem1}
Let $C$ be a square $n\times n$ matrix, such that all columns are identical, i.e., $C_{i,\bullet}=f_i$, $f\in\mathbb{R}^{n}$, and $\sum_i f_i \neq 0$. Then $C$ is diagonalizable and has a single non-zero eigenvalue $\lambda = \sum_i f_i$.
\end{theorem}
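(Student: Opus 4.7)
The central observation is that the hypothesis makes $C$ a rank-one matrix. Writing $\mathbf{1}\in\mathbb{R}^n$ for the all-ones vector and $f=(f_1,\dots,f_n)^T$, the assumption that every column of $C$ equals $f$ is exactly the outer-product identity $C=f\mathbf{1}^T$. The whole proof then boils down to the standard spectral analysis of a rank-one matrix, applied to this particular factorization.

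From the factorization I would first read off an eigenvalue by direct computation: $Cf = f\,(\mathbf{1}^T f) = \bigl(\sum_i f_i\bigr)f$. Since $\sum_i f_i\neq 0$ by hypothesis, $f$ is nonzero and $\lambda=\sum_i f_i$ is a genuine nonzero eigenvalue with eigenvector $f$. Next I would describe the kernel: for any $x\in\mathbb{R}^n$ we have $Cx=f(\mathbf{1}^T x)$, so $Cx=0$ precisely when $\mathbf{1}^T x=0$. Hence $\ker C$ is the hyperplane orthogonal to $\mathbf{1}$, which is $(n-1)$-dimensional; any basis $v_1,\dots,v_{n-1}$ of this hyperplane provides $n-1$ linearly independent eigenvectors for the eigenvalue $0$.

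To conclude diagonalizability I would show that $\{f,v_1,\dots,v_{n-1}\}$ is a basis of $\mathbb{R}^n$. This reduces to verifying that $f\notin\ker C$, equivalently $\mathbf{1}^T f=\sum_i f_i\neq 0$, which is precisely the standing hypothesis. Together these $n$ eigenvectors diagonalize $C$, and since the rank is $1$ there can be at most one nonzero eigenvalue, so $\sum_i f_i$ is the unique one.

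The only mildly delicate point is the last step: without the assumption $\sum_i f_i\neq 0$ the matrix $C$ would still have rank one but would be nilpotent (its only eigenvalue would be $0$ while $C\neq 0$), hence non-diagonalizable. So the hypothesis $\sum_i f_i\neq 0$ is doing real work, and I would emphasize this in the write-up to make clear why the argument is not purely formal. Everything else is routine linear algebra once the outer-product form $C=f\mathbf{1}^T$ is recognized.
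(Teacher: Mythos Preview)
Your proof is correct and is genuinely different from the paper's. You exploit the outer-product factorization $C=f\mathbf{1}^T$ to read off the eigenpair $(\sum_i f_i,\,f)$ and the kernel $\{x:\mathbf{1}^T x=0\}$ directly, then check that $f$ lies outside the kernel precisely because $\sum_i f_i\neq 0$. The paper instead computes the characteristic polynomial of $C-\lambda I$ by elementary column and row operations on the determinant, arriving at $p(\lambda)=(\sum_i f_i-\lambda)(-\lambda)^{n-1}$, and then pairs this with the observation $\dim\ker C=n-1$ to obtain a full set of eigenvectors. Your route is shorter and more structural: it avoids any determinant manipulation, makes the role of the hypothesis $\sum_i f_i\neq 0$ transparent, and generalizes verbatim to arbitrary rank-one matrices $uv^T$ (nonzero eigenvalue $v^Tu$, kernel $v^\perp$). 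The paper's computation, on the other hand, gives the characteristic polynomial explicitly, which is slightly more information than bare diagonalizability requires.
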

\begin{proof}
First, we note that the dimension of the kernel of $T:\R^{n}\to\R^{n},\, T(u)=Cu$, i.e, the vector space $\ker\left(T\right)=\left\{u|Cu = 0\right\}$ is $n-1$. Thus,  there are $n-1$ linearly independent vectors associated to the eigenvalue $\lambda=0$, which algebraic multiplicity has therefore to be equal or larger than $n-1$. Then, we study the nature of the characteristic polynomial:
\begin{align}
p(\lambda) & = \det\left(C-\lambda I\right)\\
           & = \begin{vmatrix}{\color{white}-}f_1-\lambda & f_1 & \dots & f_1 \\ f_2 & f_2-\lambda & \dots & f_2 \\ \vdots & \vdots & \ddots & \vdots \\ f_n & f_n & \dots & f_n -\lambda{\color{white}-}\end{vmatrix}\\
^{\underrightarrow{\text{column } j\, - \text{ column } 1,\,\forall j>1}} & = \begin{vmatrix}{\color{white}-}f_1-\lambda & \lambda & \lambda & \dots & \lambda \\ f_2 & -\lambda & 0 & \dots & 0 \\ f_3 & 0 & -\lambda & \dots & 0 \\ \vdots & \vdots & \vdots & \ddots & \vdots \\ f_n & 0 & 0 & \dots & -\lambda{\color{white}-}\end{vmatrix}\\
^{\underrightarrow{\text{row } 1\, + \text{ row } j,\,\forall j>1}} & = \begin{vmatrix}{\color{white}-}\sum_i f_i-\lambda & 0 & 0 & 0 & 0 \\ f_2 & \tikzmark{left}{$^{\color{white}|}$}-\lambda & 0 & \dots & 0 \\ f_3 & 0 & -\lambda & \dots & 0 \\ \vdots & \vdots & \vdots & \ddots & \vdots \\ f_n & 0 & 0 & \dots & -\lambda\tikzmark{right}{$_{\color{white}|}$}{\color{white}-} \end{vmatrix}\Highlight[first]
\end{align}
The highlighted zone in the determinant corresponds to $-\lambda I_{n-1}$, with $I_{n-1}$ the identity matrix in $\R^{n-1\times n-1}$. Let $\tilde{I}_{n-1}^{j}$ be $I_{n-1}$, but with the $j'$th row replaced by a row of zeros. Using that $|aA| = a^{n}|A|$ for an $n\times n$ arbitrary matrix, and that $|D| = \prod d_{ii}$ for a diagonal matrix, we calculate $p(\lambda)$ by minor determinants:
\begin{align}
    p(\lambda) & = \left(\sum_i f_i-\lambda\right)\left|-\lambda I_{n-1}\right|+\sum_{i=2}^{n}(-1)^{i-1}f_i\cancel{\left|\tilde{I}_{n-1}^{j}\right|}\\
               & = \left(\sum_i f_i-\lambda\right)(-1)^{n-1}\lambda^{n-1}.
\end{align}
As we found the last eigenvalue, and, by definition, it has at least one eigenvector, we completed the required set of $n$ eigenvectors and concluded the demonstration.
\end{proof}
\begin{corollary}
When $C$ is a contact matrix as defined in theorem~\ref{theorem1} and $f$ accounts for the fraction age-groups represent respect to the total population, the largest eigenvalue of matrix $C$ is 1.
\end{corollary}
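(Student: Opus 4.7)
The plan is to obtain the corollary as a one-line consequence of Theorem~\ref{theorem1}, invoking only the normalization condition that the entries of $f$ represent population fractions and therefore sum to one.

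First, I would verify that the contact matrix under consideration satisfies the hypotheses of Theorem~\ref{theorem1}: its columns are all equal to the vector $f = (f_1,\dots,f_n)^{T}$ of population fractions $f_i = M_i/M$, and since $\sum_i M_i = M$ we have $\sum_i f_i = 1 \neq 0$, so the theorem applies.

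Next, I would simply read off the spectrum of $C$ from Theorem~\ref{theorem1}: the eigenvalue $0$ appears with algebraic multiplicity $n-1$, and the remaining eigenvalue equals $\sum_i f_i$. Substituting the normalization $\sum_i f_i = 1$ yields that the unique non-zero eigenvalue of $C$ is $1$. Since the spectrum is therefore $\{0,1\}$ and $1 > 0$, the largest eigenvalue is $1$, as claimed.

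There is essentially no genuine obstacle; the only thing to record explicitly is the interpretation of $f$ as a probability vector (fractions of a population partition), which forces the normalization $\sum_i f_i = 1$. After that step, the corollary is immediate from Theorem~\ref{theorem1}, with no need to redo any eigenvalue computation.
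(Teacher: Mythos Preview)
Your proposal is correct and follows exactly the same approach as the paper: the paper's proof is a single line stating that the result is direct from Theorem~\ref{theorem1} together with $\sum_i f_i = 1$, which is precisely what you do.
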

\begin{proof}
Direct from theorem~\ref{theorem1}, knowing that $\sum_i f_i = 1$.
\end{proof}
\clearpage
\section{Additional figures referenced in the main text}

\begin{figure}[!h]
\hspace*{-1cm}
    \centering
    \includegraphics[]{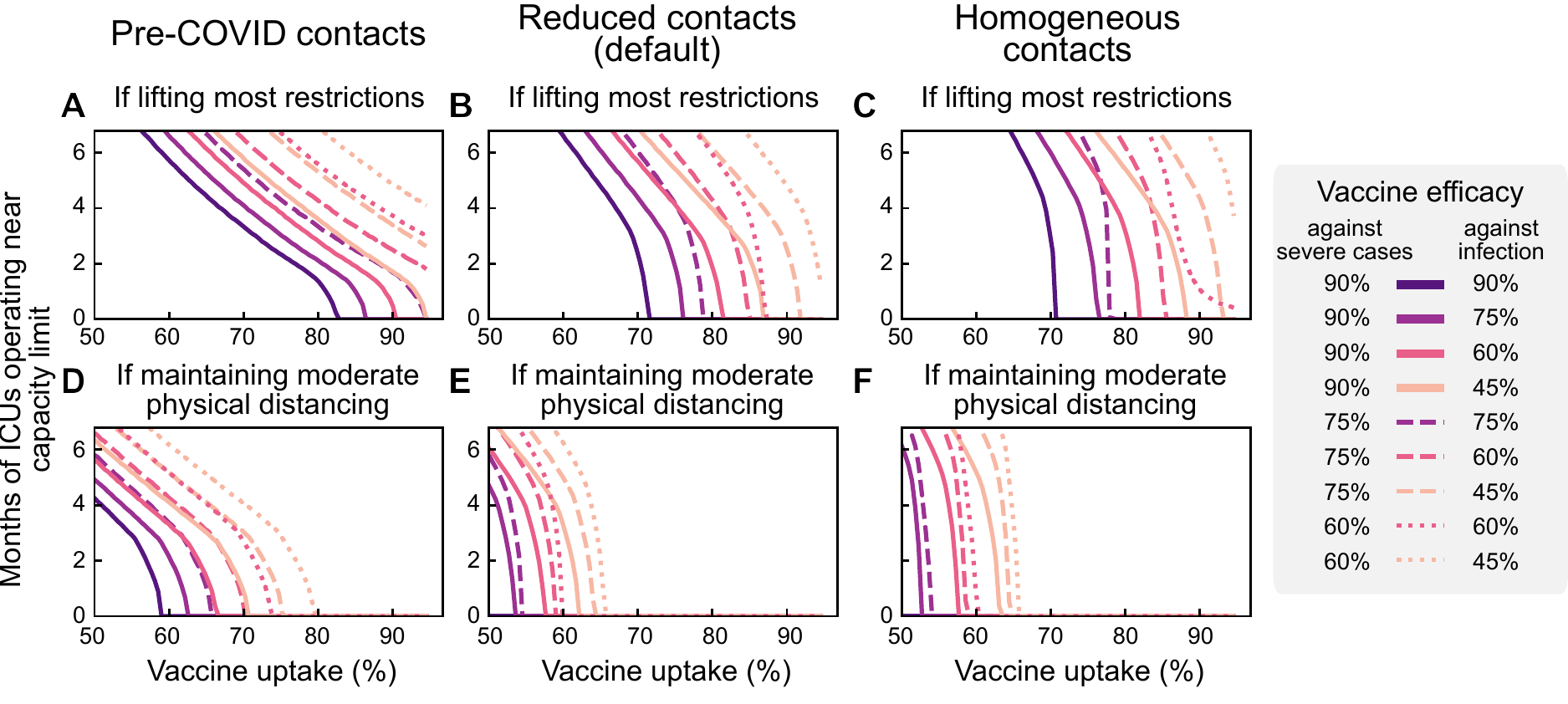}
    \caption{%
        \textbf{Contact structure can have a significant impact on the population immunity threshold.} 
        We assume that infections are kept stable at 250 daily infections until all age groups have been vaccinated. Then most restrictions are lifted, leading to a wave if vaccine uptake has not been high enough (see \figref{fig:after_vaccination}~A). We measure the severity of the wave (quantified by the duration of full ICUs) for varying uptake and vaccine efficacies for different contact structures (see \figref{fig:Contact_Structure}~A,B,C).
        \textbf{A-C:} The duration of the wave (measured by the duration of full ICUs) depends on the vaccine uptake and on the effectiveness of the vaccine measured by its efficacy at preventing infection (shades of purple) and severe illness (vaccine efficacy, full vs dashed vs dotted). 
        \textbf{D-F:} If some NPIs are kept in place (such that the gross reproduction number goes up to $R_t = 2.5$), ICUs would be prevented from overflowing even in some cases of lower vaccine effectiveness. If precautionary measures are dropped in all age groups, including schools (A,D) the required uptake to prevent a further severe wave is increased by about 10\% when compared to our default scenario of some continued measures to reduce the potential contagious contacts in school settings (B,E) or to completely homogeneous contacts (C,F).
        Not all combinations of vaccine effectiveness are possible as the vaccine efficacy against severe illness is by definition larger as the protection against any infection at all.
        }
    \label{fig:after_vaccination_extended}
\end{figure}

\begin{figure}[!h]
\hspace*{-1cm}
    \centering
    \includegraphics[width=18.5cm]{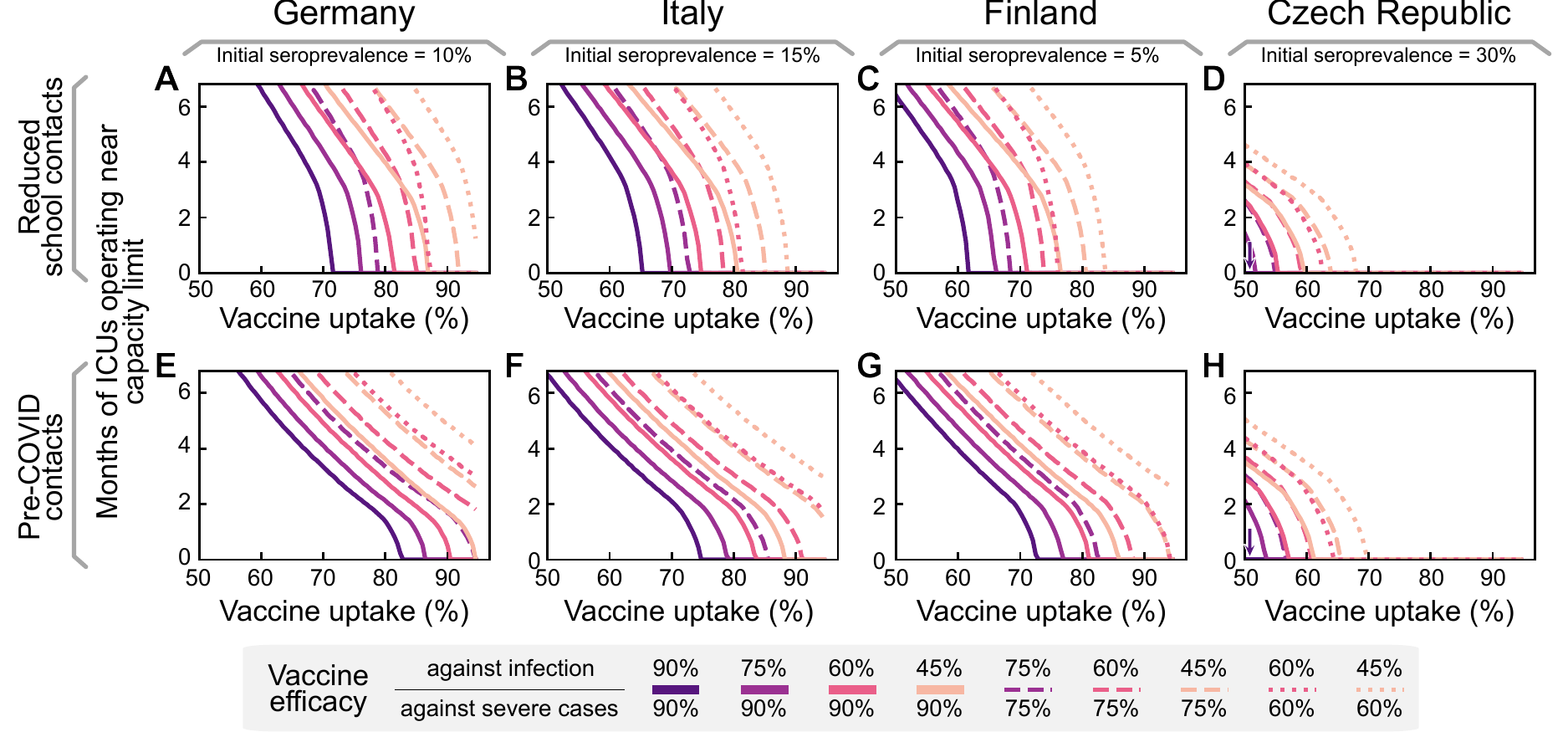}
    \caption{%
        \textbf{EU countries with different demographics have very similar dynamics -- but the required vaccine uptake to guard against further severe waves is most sensitive to the initial seroprevalence.} Extended version of \figref{fig:after_vaccination_EU}, including more combinations of vaccine efficacies.
        \textbf{A--D:} If releasing all measures to pre-COVID contacts, keeping only some measures aiming to cup the reproduction number at 3.5.
        \textbf{E--H:} If releasing all measures to pre-COVID contacts, keeping only some measures aiming to cup the reproduction number at 3.5 and halving the contagiousness of contacts at school ages.
        }
    \label{fig:after_vaccination_EU_extended}
\end{figure}

\begin{figure}[!h]
    \centering
    \includegraphics[width=14.3cm]{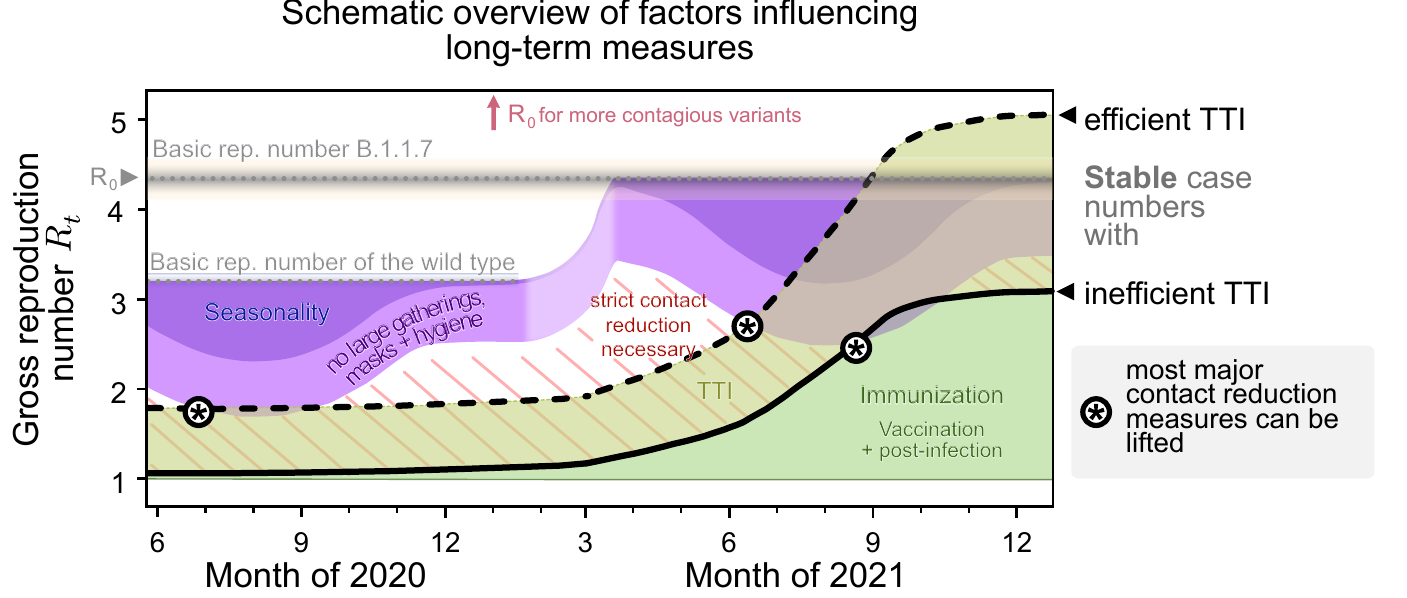}
    \caption{%
        \textbf{Even with the emergence of the highly contagious B.1.1.7 variant vaccinations are a promising mid-term strategy against COVID-19. Staying at low case numbers can greatly increase the individual freedom, especially in the long-term.} Schematic outlook into the effects of vaccination and the B.1.1.7 variant of SARS-CoV-2 on the societal freedom in the EU in 2021 compared to 2020 (see also the caption for \figref{fig:overview}~A). In 2020, seasonality effects and efficient test-trace-and-isolate (TTI) programs at low case numbers allowed for stable case numbers with only mild restrictions during summer, until about September. In 2021, vaccinations are expected to allow for greater freedom, but also a more contagious variant (B.1.1.7) is prevalent across the EU. Efficient TTI at low case numbers would thus help lifting major restrictions earlier. The exact transition period between the wild type and B.1.1.7 (light purple shaded area) varies regionally.
        }
    \label{fig:contact_timelines}
\end{figure}

\begin{figure}[!h]
\hspace*{-1cm}
    \centering
    \includegraphics[width=18.5cm]{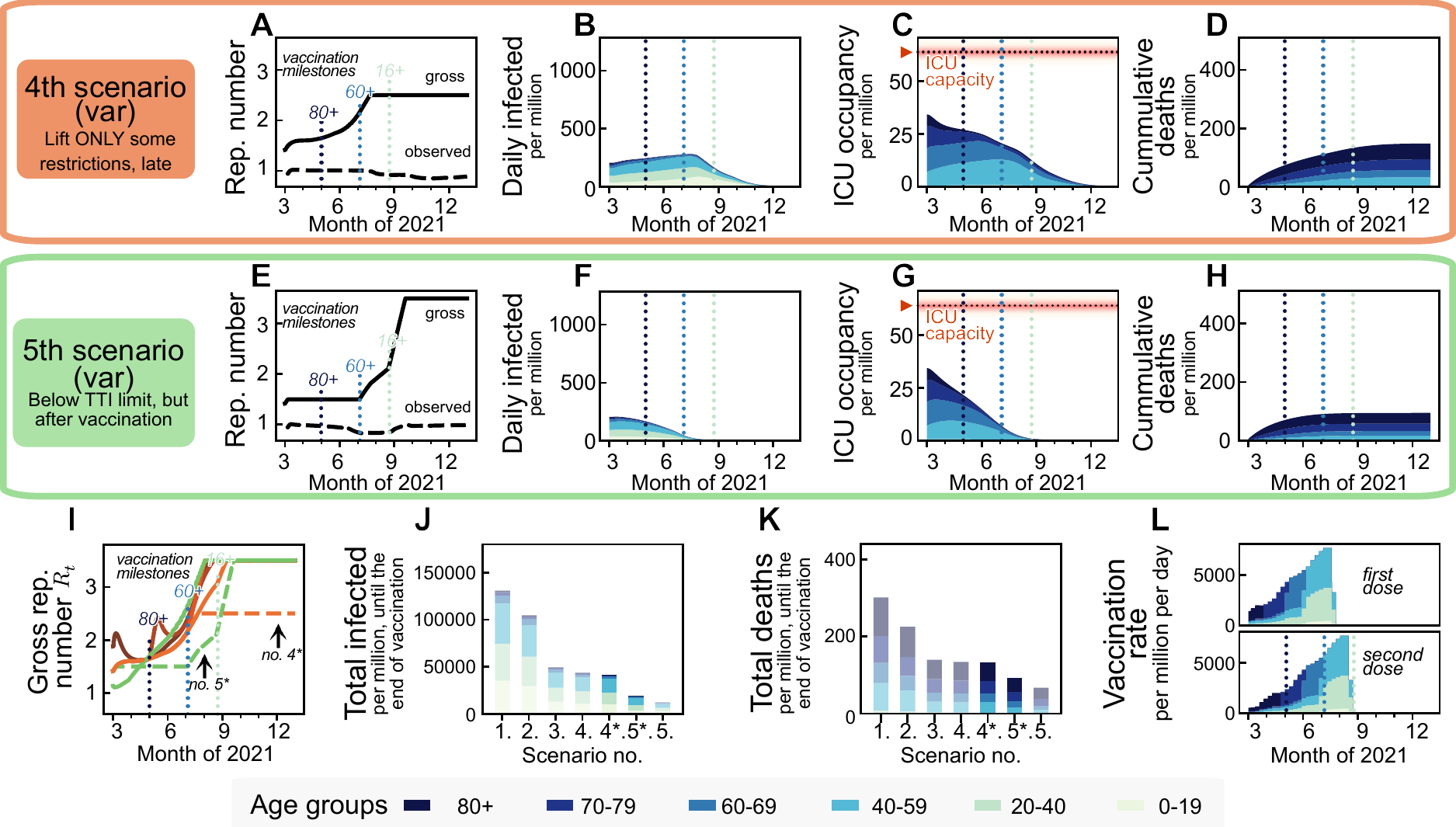}
    \caption{%
        \textbf{Lowering the case numbers without the most stringent restrictions opens a middle ground between freedom and fatalities and prevents a new wave in the long term.} \textbf{A--D:} Variation of the fourth scenario from the main text (see Fig.\ref{fig:lifting_restrictions}), where moderate restrictions are kept in place in the long term (letting the gross reproduction number go up to 2.5, compared to 3.5 in the default scenarios). \textbf{E--H:} Variation of the fifth scenario from the main text (see Fig.\ref{fig:two_extreme_strategies}) avoiding the strict initial restrictions. Keeping the gross reproduction number at a moderate level (1.5) until the everyone above 60 has been offered vaccination allows to decrease case numbers steadily. Over the summer a slight gradual increase in the contacts is allowed and all NPIs expect for test-trace-and-isolate (TTI) and enhanced hygiene are lifted when everyone received the vaccination offer (increasing the gross reproduction number to 3.5). \textbf{I:} The variation of the fourth scenario initially allows for the same increase in freedom as all the main scenarios, but needs more restrictions in the long term. The variation of the fifth scenario calls for stricter NPIs in the mid-term, but grants high freedom after summer. \textbf{J,K:} Both proposals lead to low number of infections and fatalities. \textbf{L:} Projected vaccination rates (see Fig. \ref{fig:two_extreme_strategies}).
         }
    \label{fig:var_Scenarios}
\end{figure}

\begin{figure}[!h]
    \centering
    \includegraphics[width=15cm]{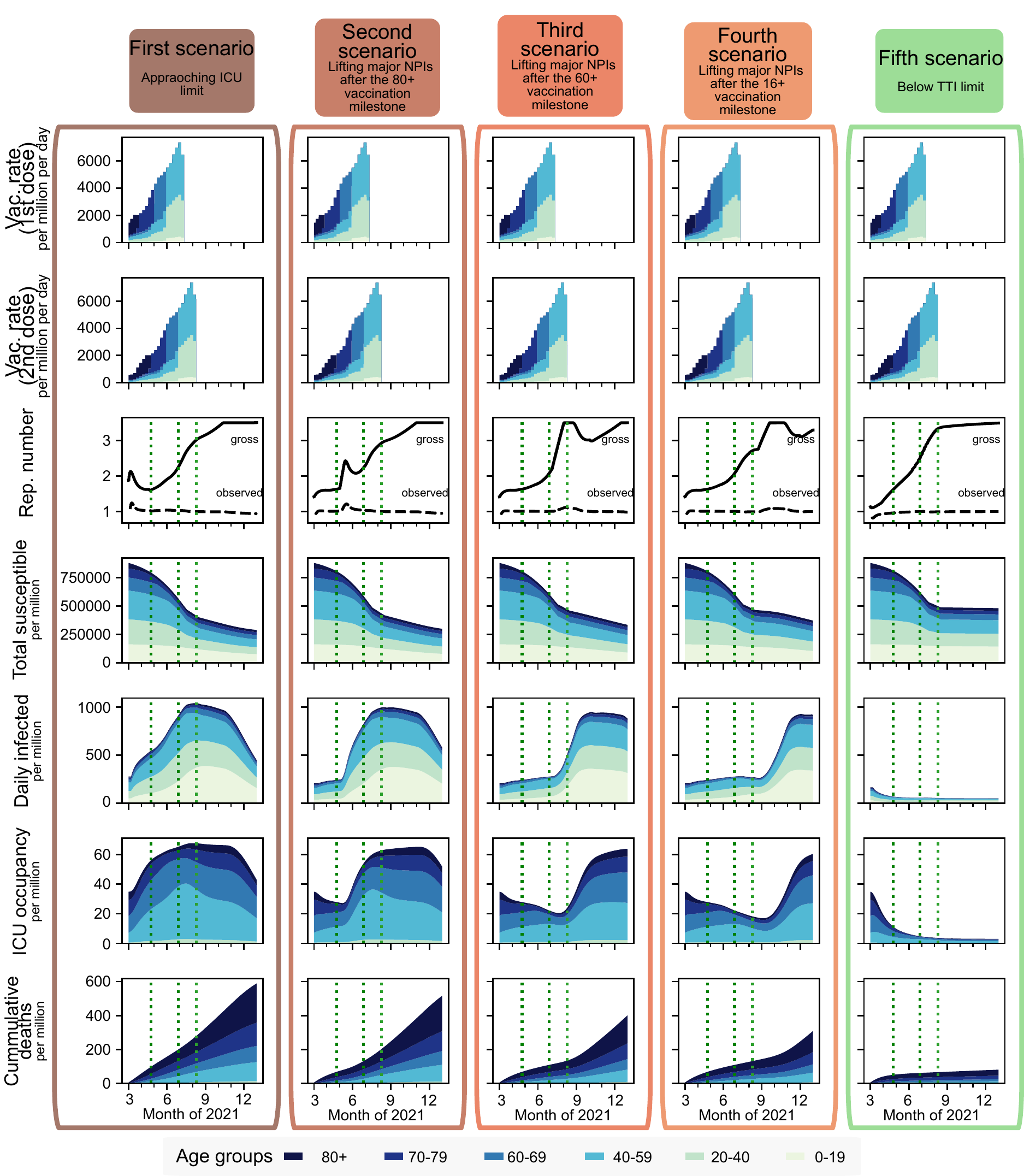}
    \caption{%
        \textbf{Long-term control strategies} from main text figures \ref{fig:two_extreme_strategies} and \ref{fig:lifting_restrictions}. Scenarios using default protection against infection $\fracImmun=0.75$ and \textbf{low vaccine uptake of 70\%} among the adult population.
        }
    \label{fig:scenario_eta_def_uptake_low}
\end{figure}

\begin{figure}[!h]
    \centering
    \includegraphics[width=15cm]{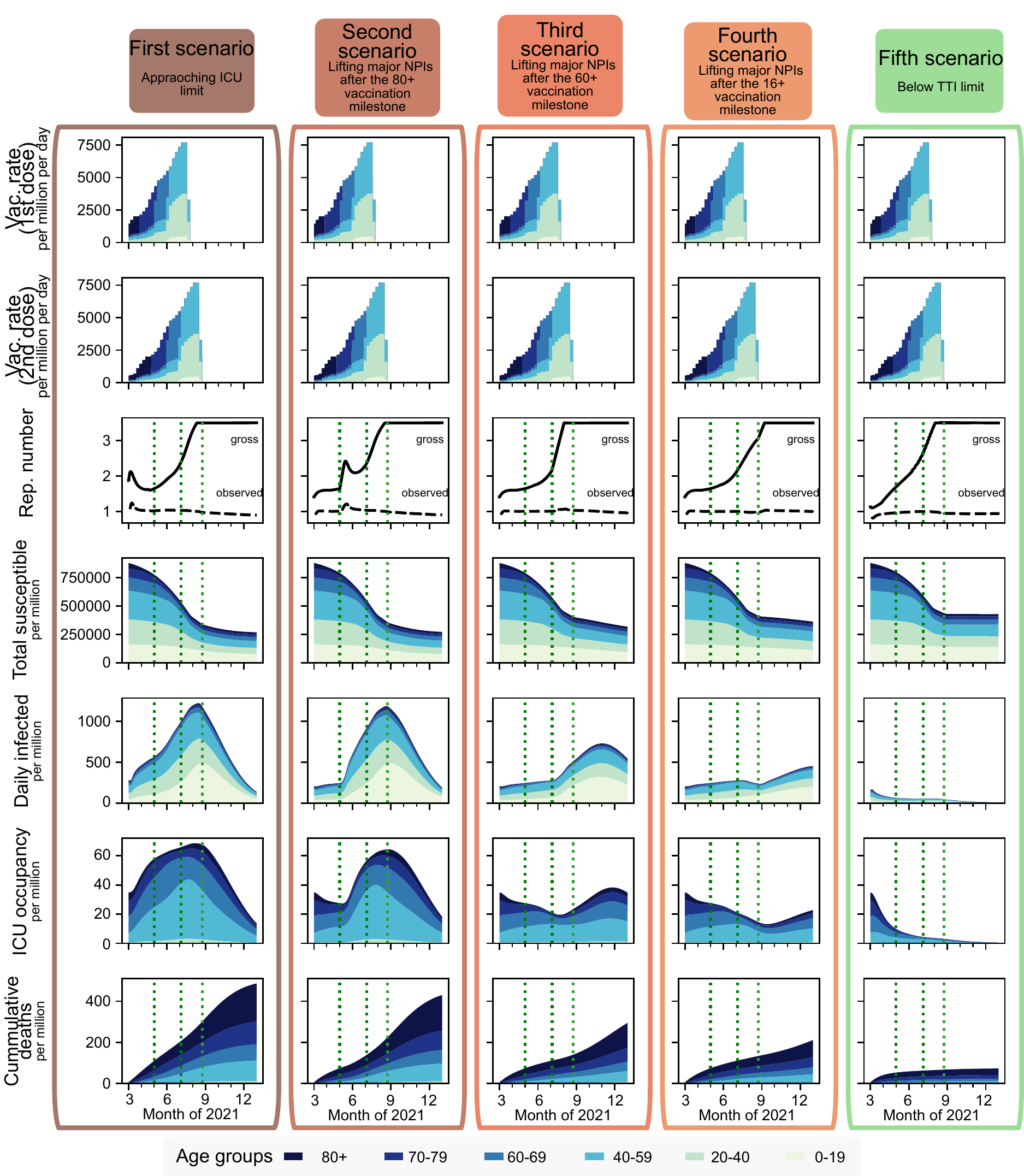}
    \caption{%
        \textbf{Long-term control strategies} from main text figures \ref{fig:two_extreme_strategies} and \ref{fig:lifting_restrictions}. Scenarios using default protection against infection $\fracImmun=0.75$ and \textbf{default vaccine uptake  of 80\%} among the adult population.
        }
    \label{fig:scenario_eta_def_uptake_def}
\end{figure}

\begin{figure}[!h]
    \centering
    \includegraphics[width=15cm]{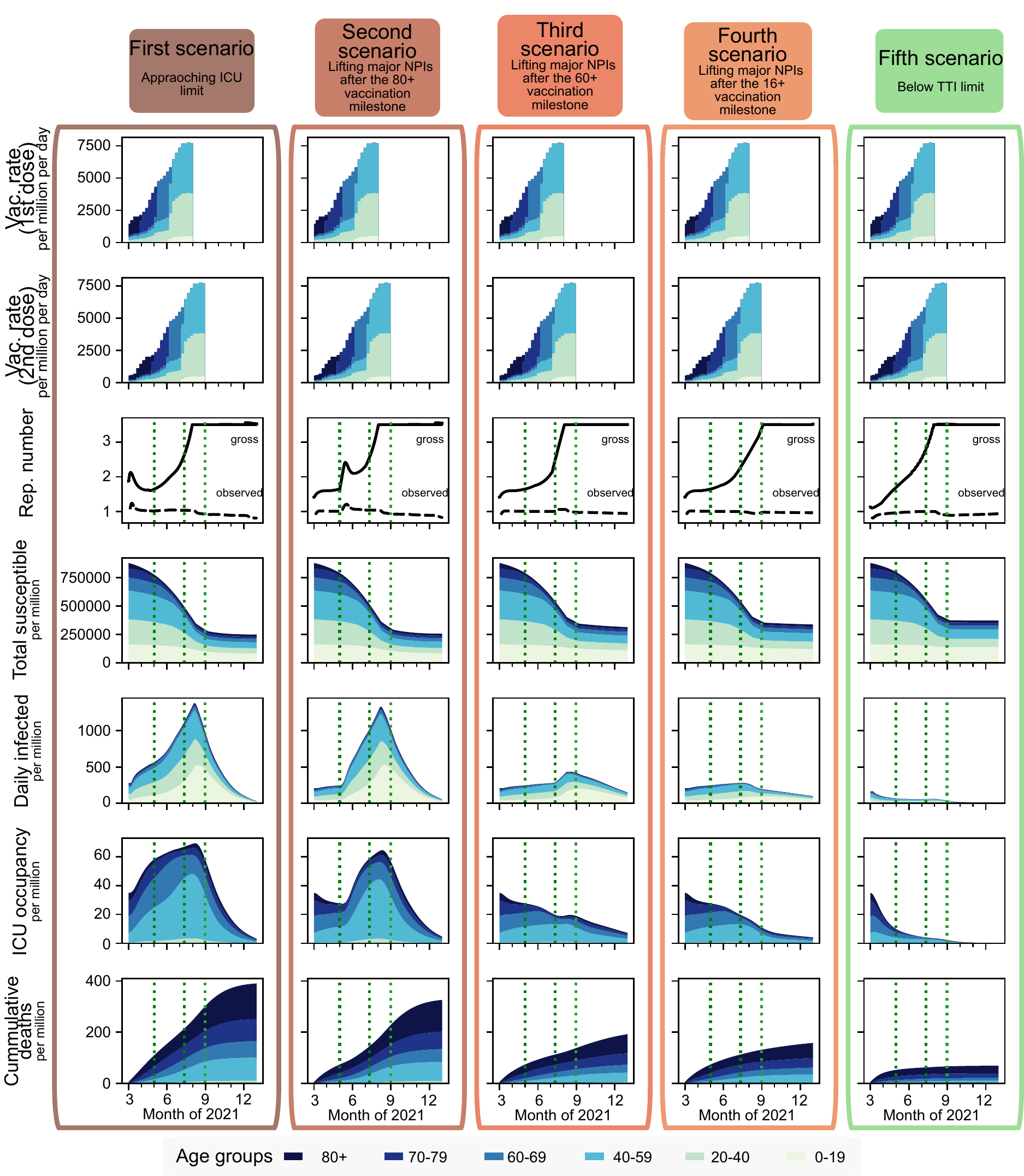}
    \caption{%
        \textbf{Long-term control strategies} from main text figures \ref{fig:two_extreme_strategies} and \ref{fig:lifting_restrictions}. Scenarios using default protection against infection $\fracImmun=0.75$ and \textbf{high vaccine uptake of 90\%} among the adult population.
        }
    \label{fig:scenario_eta_def_uptake_high}
\end{figure}
\clearpage
\section{Mirror figures using different contact matrices}

\begin{figure}[!h]
\hspace*{-1cm}
    \centering
    \includegraphics[width=18.5cm]{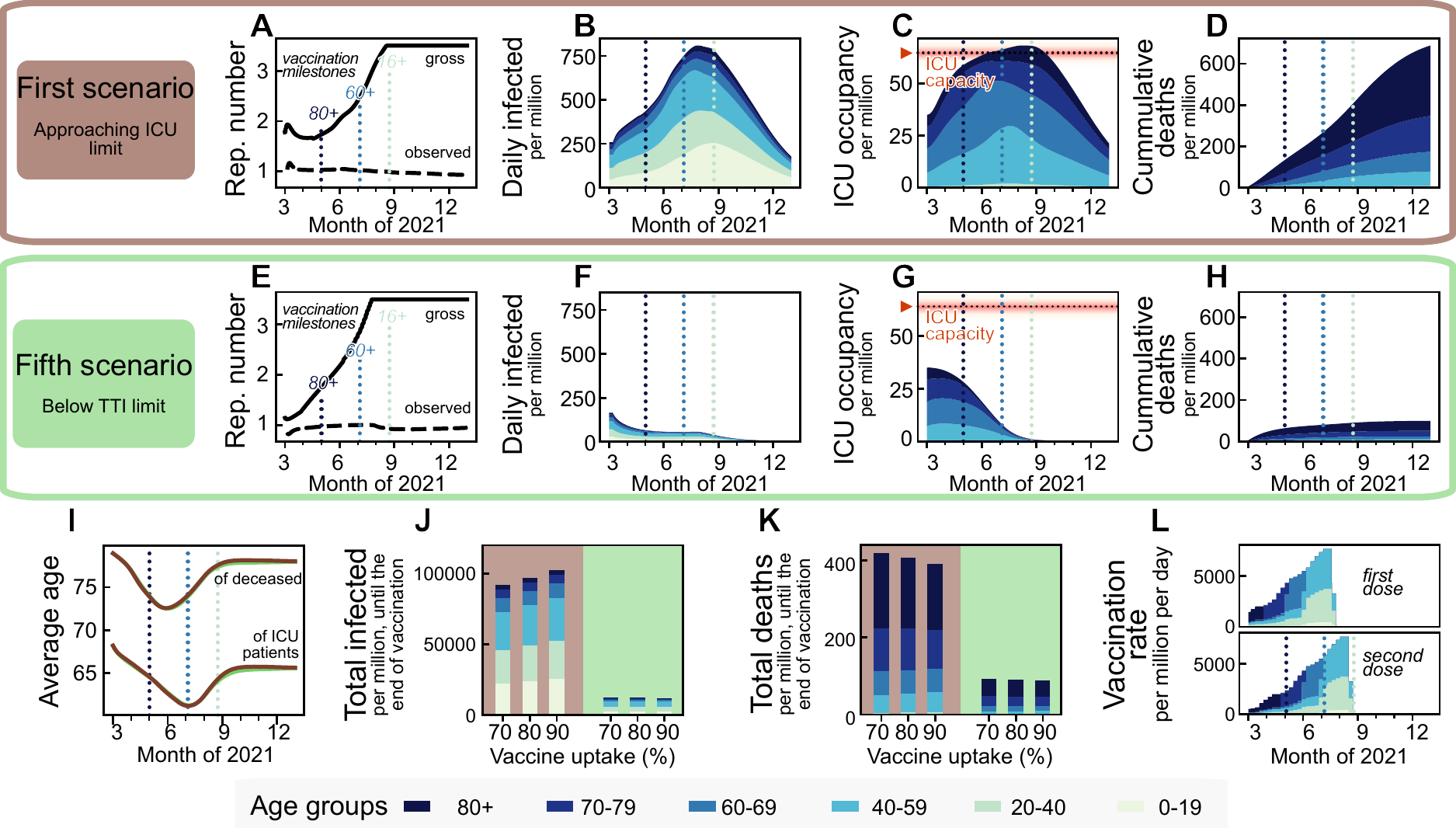}
    \caption{%
        \textbf{Mirror of \figref{fig:two_extreme_strategies}, using a homogeneous contact structure} (see \figref{fig:Contact_Structure}~A).
        }
    \label{fig:M_H_Fig2}
\end{figure}

\begin{figure}[!h]
\hspace*{-1cm}
    \centering
    \includegraphics[width=18.5cm]{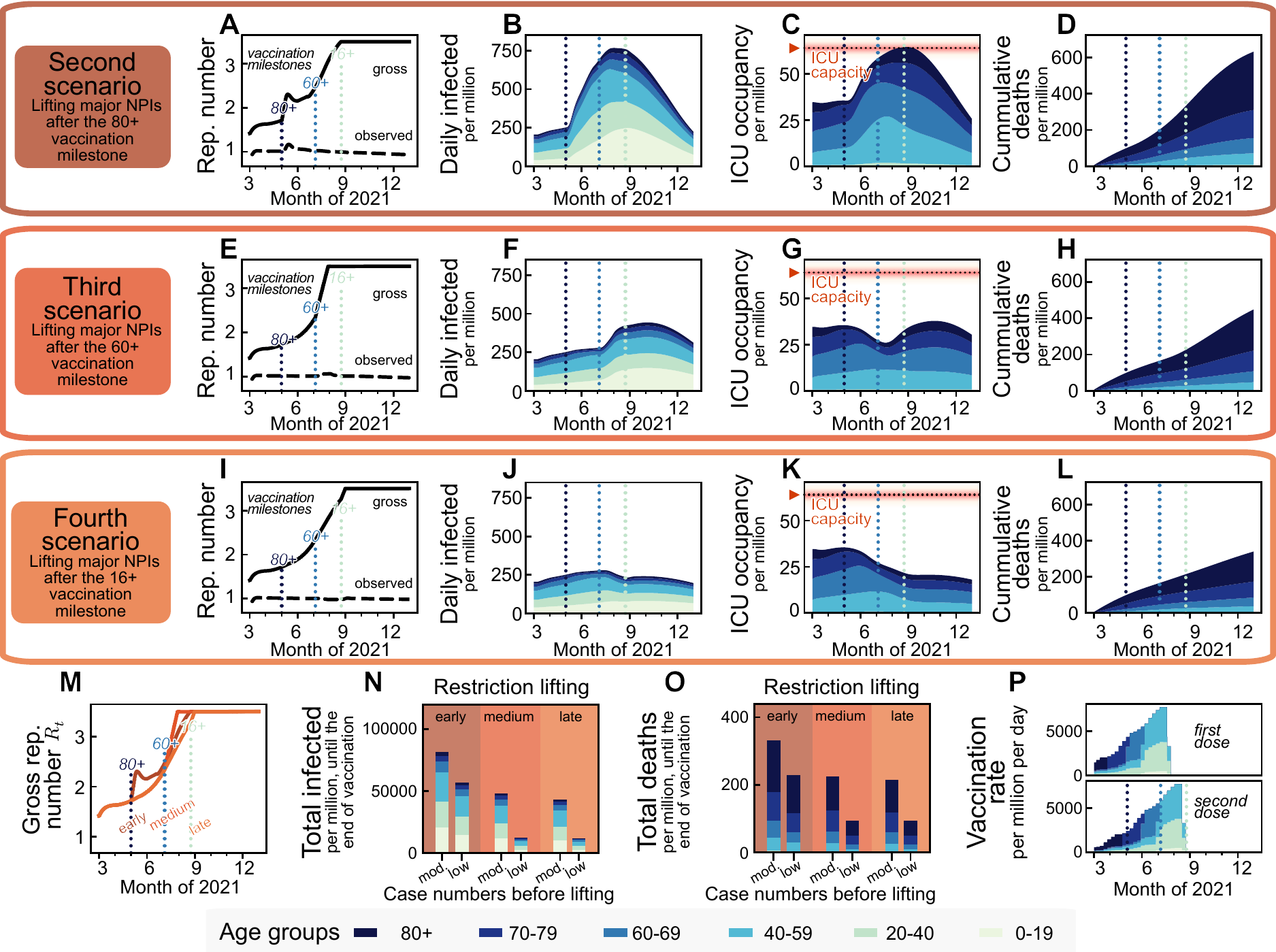}
    \caption{%
        \textbf{Mirror of \figref{fig:lifting_restrictions}, using a homogeneous contact structure} (see \figref{fig:Contact_Structure}~A).
        }
    \label{fig:M_H_Fig3}
\end{figure}

\begin{figure}[!h]
\hspace*{-1cm}
    \centering
    \includegraphics[width=18.5cm]{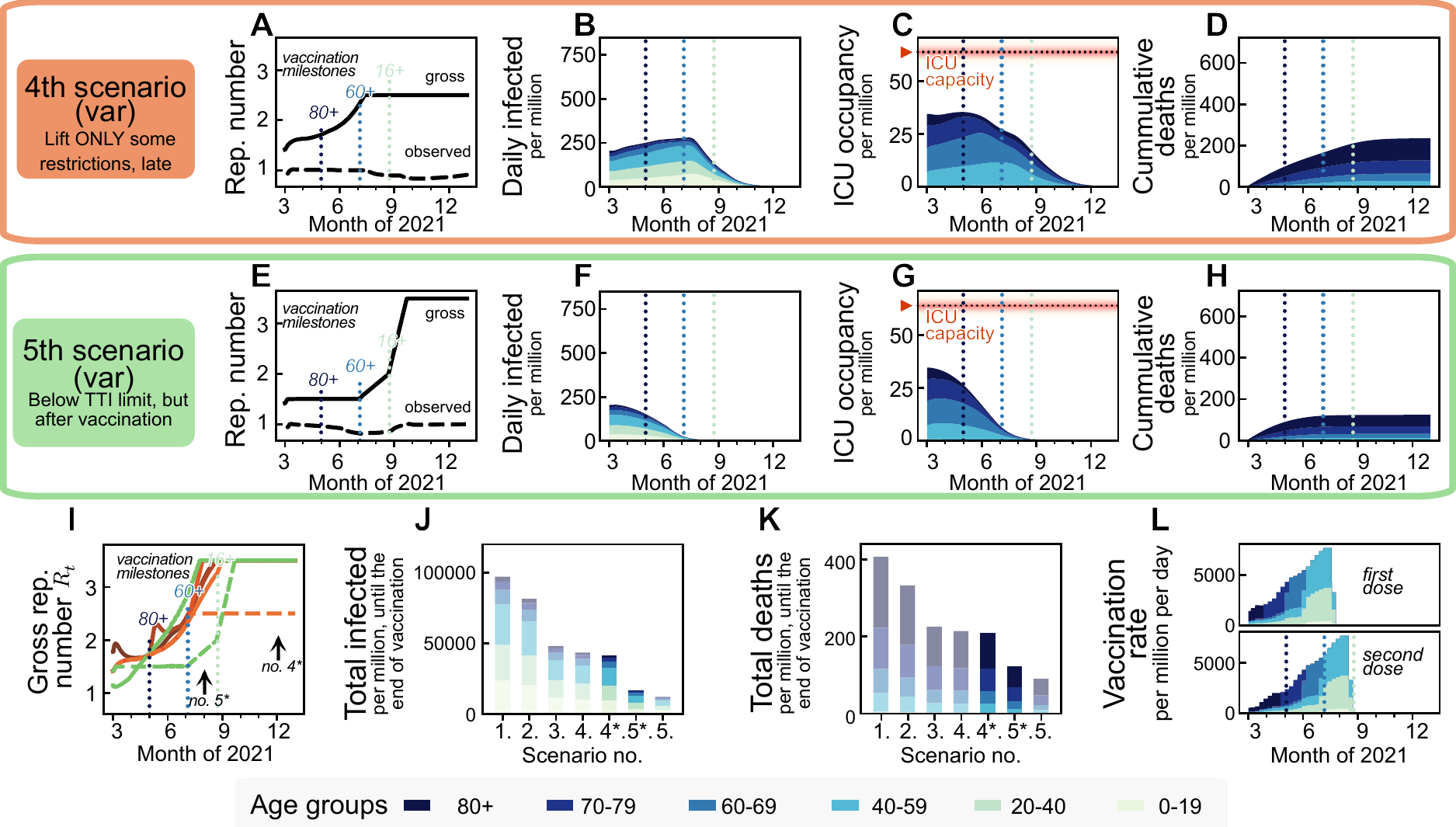}
    \caption{%
        \textbf{Mirror of Figure~\ref{fig:var_Scenarios}, using a homogeneous contact structure} (see \figref{fig:Contact_Structure}~A).
        }
    \label{fig:M_H_Fig_SupScen}
\end{figure}

\begin{figure}[!h]
\hspace*{-1cm}
    \centering
    \includegraphics[width=18.5cm]{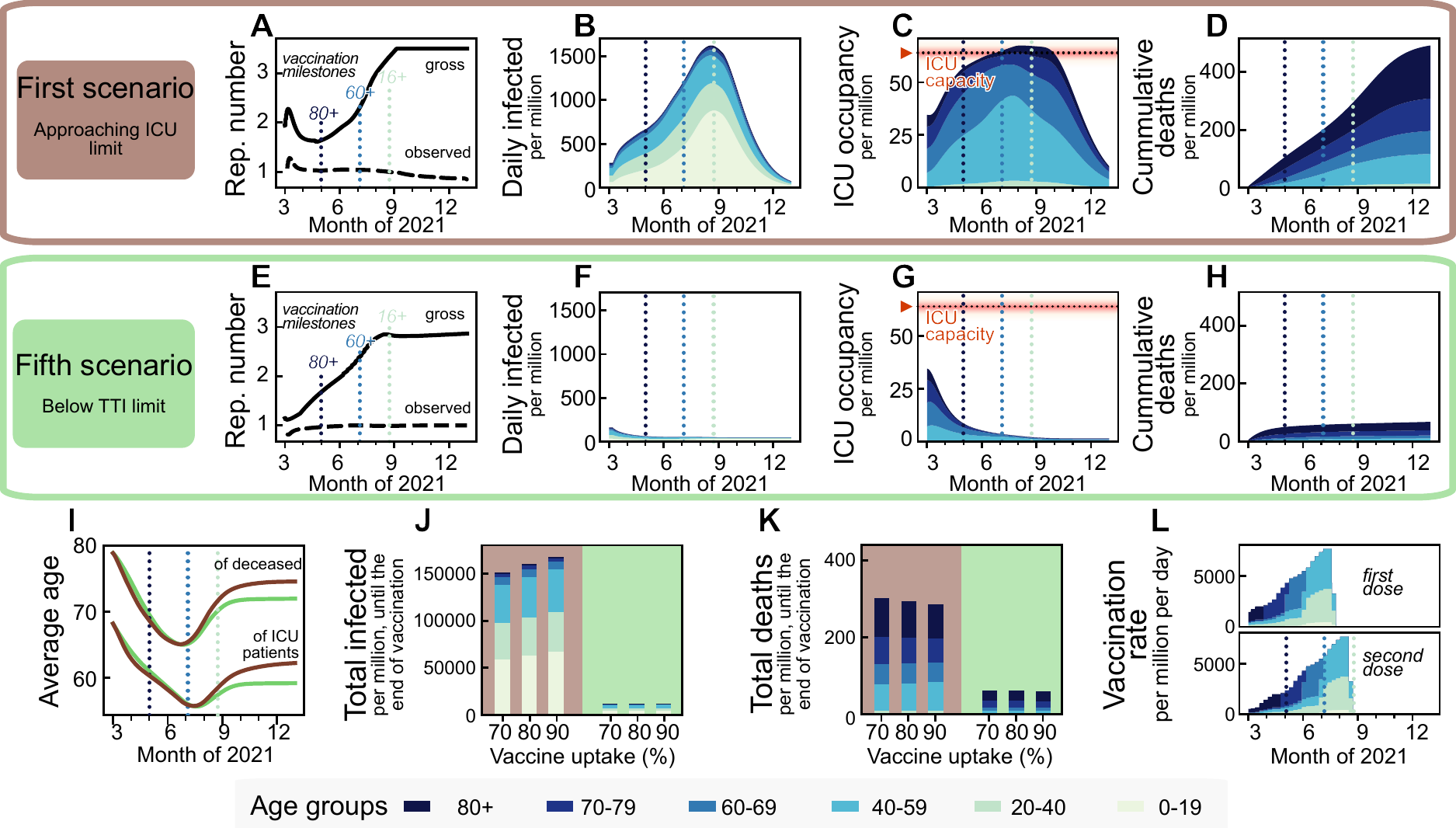}
    \caption{%
        \textbf{Mirror of \figref{fig:two_extreme_strategies}, using an empirical pre-COVID contact structure} (see \figref{fig:Contact_Structure}~B).
        }
    \label{fig:M_PreCov_Fig2}
\end{figure}

\begin{figure}[!h]
\hspace*{-1cm}
    \centering
    \includegraphics[width=18.5cm]{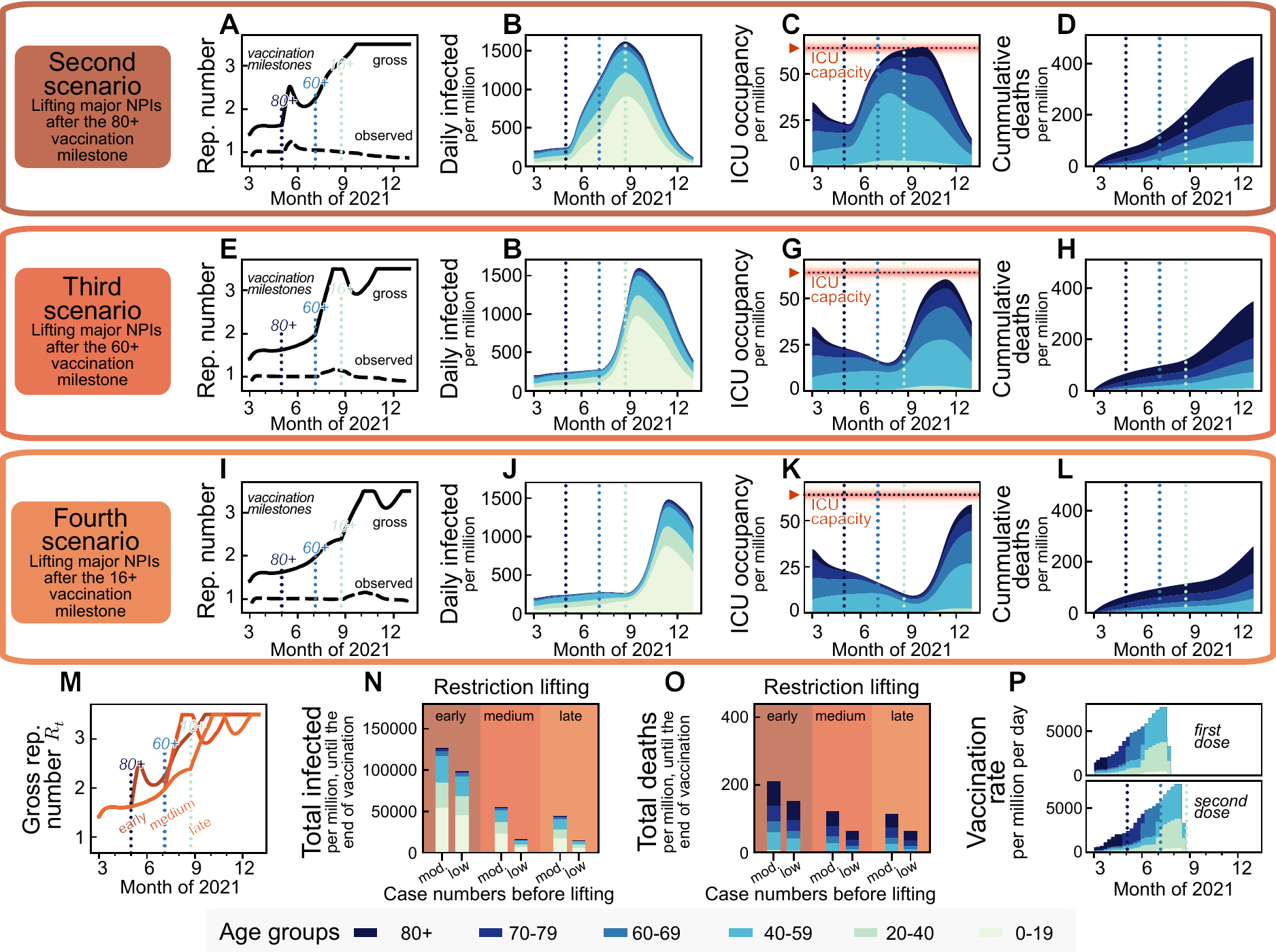}
    \caption{%
        \textbf{Mirror of \figref{fig:lifting_restrictions}, using an empirical pre-COVID contact structure} (see \figref{fig:Contact_Structure}~B).
        }
    \label{fig:M_PreCov_Fig3}
\end{figure}

\begin{figure}[!h]
\hspace*{-1cm}
    \centering
    \includegraphics[width=18.5cm]{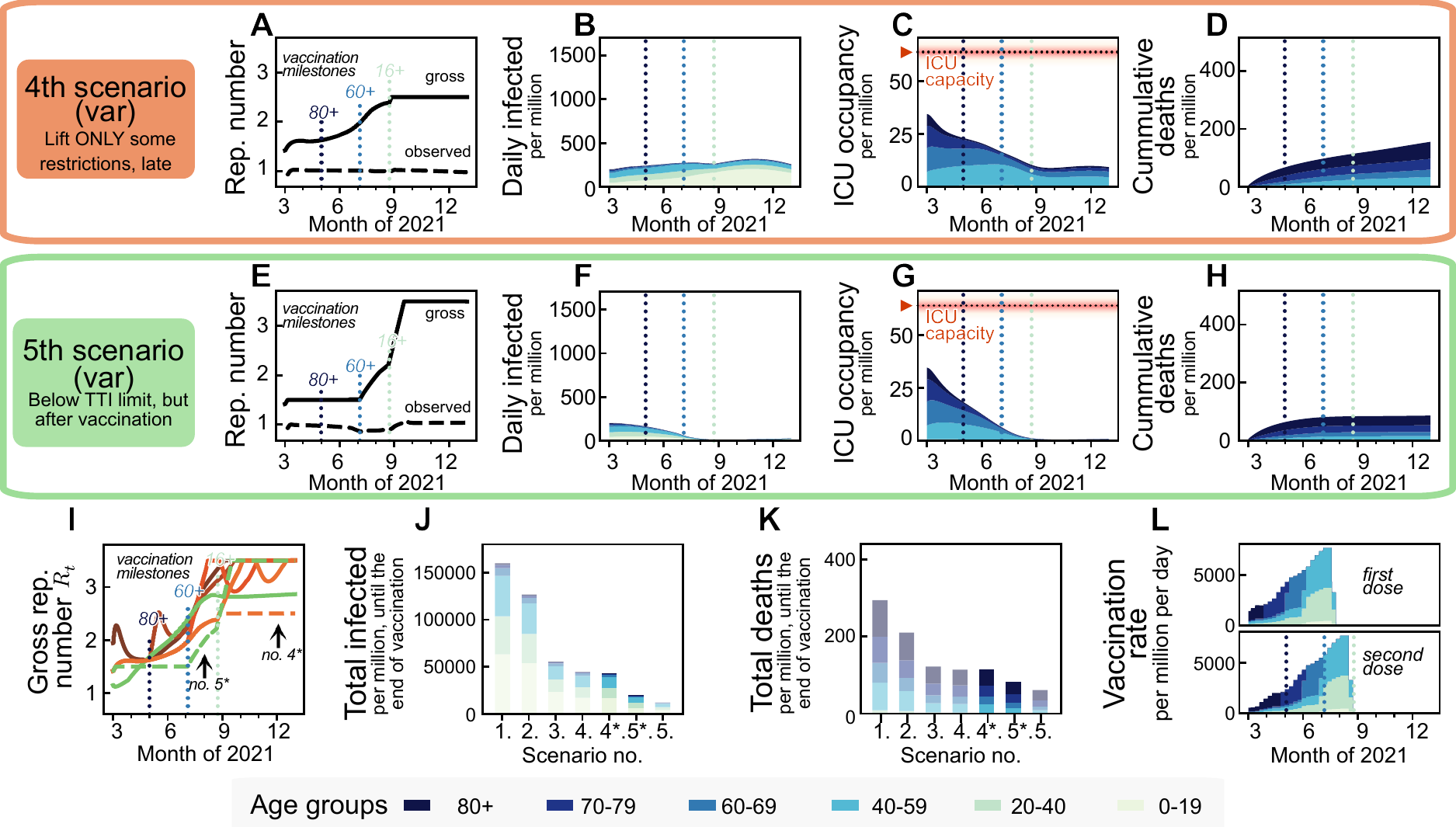}
    \caption{%
        \textbf{Mirror of Figure~\ref{fig:var_Scenarios}, using an empirical pre-COVID contact structure (see \figref{fig:Contact_Structure}~B)}.
        }
    \label{fig:M_PreCov_Fig_SupScen}
\end{figure}

\end{document}